\documentclass[letterpaper]{scrartcl}
\usepackage[total={16.2cm, 23.5cm}]{geometry}
\usepackage{amsmath,amsthm,amssymb}
\usepackage{microtype}
\usepackage[T1]{fontenc}
\usepackage{lmodern}
\usepackage{txfonts}
\usepackage{tikz}
\usetikzlibrary{patterns,decorations.pathreplacing, arrows.meta}
\usepackage{hyperref}
\hypersetup{
	pdfencoding=auto, psdextra,
	colorlinks=true,citecolor=green!30!black,linkcolor=red!40!black,urlcolor=blue!60!black
}

\usepackage[sort&compress,nameinlink]{cleveref}
\usepackage{nicefrac}
\usepackage[numbers,sort]{natbib}
\usepackage{comment}
\usepackage{paralist}
\usepackage[shortlabels]{enumitem}
\usepackage{xcolor}
\usepackage{colortbl}
\usepackage{subcaption}
\usepackage{booktabs}
\usepackage{listings}
\usepackage{algorithm2e}
\definecolor{winered}{rgb}{0.5,0.1,0.1}
\usepackage{fontawesome}
\usepackage{tabularx}

\newcolumntype{Y}{>{\centering\arraybackslash}X}
\newcolumntype{Z}{>{\raggedleft\arraybackslash}X}
\newcolumntype{a}{>{\columncolor{black!05}}Y}

\renewcommand*{\le}{\leqslant}
\renewcommand*{\leq}{\leqslant}
\renewcommand*{\ge}{\geqslant}
\renewcommand*{\geq}{\geqslant}
\renewcommand{\epsilon}{\varepsilon}

\crefname{table}{Table}{Tables}
\Crefname{table}{Table}{Tables}
\crefname{figure}{Figure}{Figures}
\crefname{theorem}{Theorem}{Theorems}
\crefname{definition}{Definition}{Definitions}
\crefname{corollary}{Corollary}{Corollaries}
\crefname{observation}{Observation}{Observations}
\crefname{question}{Question}{Question}
\crefname{lemma}{Lemma}{Lemmas}
\crefname{example}{Example}{Examples}
\crefname{reduction}{Reduction}{Reductions}
\crefname{construction}{Construction}{Constructions}
\crefname{subsection}{Section}{Sections}
\crefname{section}{Section}{Sections}
\crefname{proposition}{Proposition}{Propositions}
\crefname{algorithm}{Algorithm}{Algorithms}
\crefname{algocf}{Algorithm}{Algorithms}
\Crefname{equation}{Inequality}{Inequalities}
\crefname{lstlisting}{listing}{listings}

\newcommand{\myemph}[1]{{\color{winered}\emph{#1}}}
\newcommand{\naturals}{{{\mathbb{N}}}}

\newcommand{\reals}{{{\mathbb{R}}}}

\newcommand{\cost}{{{\mathrm{cost}}}}

\renewcommand{\part}{{{\mathrm{part}}}}

\theoremstyle{definition}
\newtheorem{definition}{Definition}

\newtheorem{example}{Example}
\newtheorem{remark}{Remark}
\theoremstyle{plain}
\newtheorem{theorem}{Theorem}

\newtheorem{corollary}[theorem]{Corollary}
\newtheorem{proposition}[theorem]{Proposition}

\begin{document}
	\title{Robust Proportionality Axioms \\ for Additive Utilities}
	\author{
	Piotr Skowron\\
  	University of Warsaw\\
  	{\small \url{p.skowron@mimuw.edu.pl}}}
	\date{}

	\maketitle
	
	\begin{abstract}
	\end{abstract}
	
	\lstset{
		keywords={input, output, for, while, if, else, return, break},
		comment=[l]{//},
		frame=single,
		mathescape=true,
		float,
		captionpos=b,
		numbers=left,
		breaklines=true,
	}
	
\begin{abstract}
We study the model of participatory budgeting, in which a subset of candidates is selected subject to a budget constraint. Voters express their preferences by assigning a numerical utility to each candidate. Our focus is on formal analysis of proportionality, a central notion of fairness in multi-winner elections. Intuitively, proportionality means that each group of voters should influence the election outcome roughly in proportion to its size. Recent ``plus'' axioms of proportionality are particularly appealing notions: they are verifiable in polynomial time, and they provide strong representation guarantees without requiring groups of voters to have cohesive preferences. However, their existing formulations are restricted to approval-based preferences. We generalize extended justified representation plus (EJR+), and Core plus (Core+) to general additive utilities. Our approach builds on a payment-based interpretation of proportionality: candidates' costs are shared among their supporters, with the aim of balancing voters' total payments.
\end{abstract}
	
\section{Introduction}

We consider the participatory budgeting model~\cite{participatoryBudgeting, rey-mal:survey-on-pb, pet-pie-sko:c:participatory-budgeting-cardinal}, in which there is a set of candidates, each associated with a cost, and the goal is to select a subset of these candidates, whose total cost does not exceed a given budget. Candidates are selected based on the preferences of a set of voters. We assume that each voter expresses her preferences by \myemph{assigning a utility to each candidate}: a non-negative number measuring how much the voter values that candidate. 

The goal is to select an outcome that proportionally reflects voters' preferences. Intuitively, each group of voters should be able to determine how a share of the budget roughly proportional to its size is spent. In participatory budgeting, proportionality ensures that the decision about public spending takes into account the diversity of opinions within the population, and that the decision is not dictated solely by one (typically the largest) group of voters. If all candidates have equal costs, then our framework collapses to the model of committee elections~\cite{FSST-trends, lac-sko:multiwinner-book}. There, proportionality is essential to ensure that minorities of voters are represented in the elected body. The participatory budgeting model also applies beyond the political domain, and there proportionality serves other critically important purposes.\footnote{For example, proportional  rules are used in blockchain consensus protocols to guard against disproportionate influence by coordinated groups of users~\cite{cevallos2020validator, burdges2020overview}. Such rules have also been used to improve the performance of genetic algorithms~\cite{fal-saw-sch-smo:c:multiwinner-genetic-algorithms}.}

In recent years, researchers have formalized the intuition behind proportionality in numerous ways, giving rise to a rich landscape of proportionality axioms~\cite{justifiedRepresentation, pjr17, rey-mal:survey-on-pb, pet-sko:laminar, BP-ejrplus, don-pet:axioms-for-axioms}. Among these, the ``plus'' axioms stand out. \citet{BP-ejrplus} introduced extended justified representation plus (EJR+). \citet{teh:fjr-plus} and \citet{bec-gre-pet:core} subsequently proposed the stronger axioms of full justified representation plus (FJR+) and Core plus (Core+). These strengthen the axioms of EJR~\cite{justifiedRepresentation}, FJR~\cite{pet-pie-sko:c:participatory-budgeting-cardinal}, and Core\cite{justifiedRepresentation}, respectively, and offer several key advantages over their original counterparts:
\begin{inparaenum}[(1)]
\item it can be verified in polynomial time whether a given outcome satisfies these axioms;
\item they provide strong representation guarantees for groups of voters, without requiring these groups to have cohesive preferences; and
\item they are among the strongest proportionality axioms known to be satisfiable.
\end{inparaenum}

These axioms were originally proposed for approval-based committee elections. With approval ballots, each voter simply indicates the subset of candidates she \myemph{approves}. Such ballots can be interpreted within our additive-utility model in two standard ways~\cite{fal-fli-pet-pie-sko-sto-szu-tal:pb-experiments}. Under \myemph{approval utilities}, each voter assigns utility one to every approved candidate and zero to all others. Under \myemph{cost utilities}, each voter assigns to every approved candidate a utility equal to its cost and zero to all others. The latter approach reflects the assumption that voters prefer larger, and more costly candidates, among the ones they approve of. The axiom of EJR+ has been defined only for approval and cost utilities. Similarly, Core+ has so far been formulated only for approval utilities.
\myemph{In this paper, we generalize EJR+ and Core+ to general additive utilities.} In related work, \citet{neo-teh:fjr-additive} extended FJR+ to participatory budgeting with additive utilities, obtaining a polynomial-time verifiable axiom that implies FJR up to one.

Our approach is based on the idea of payments: the cost of each candidate should be shared among the voters who support it. From this perspective, proportionality expresses the intuitive requirement that voters' total payments should be as balanced as possible. Indeed, our first step is to reformulate EJR+ in terms of payments. Then, the payment-based formulation naturally extends to additive utilities by imposing that each candidate's cost should be shared among its supporters in proportion to their utilities for that candidate. This reflects the idea of minimizing the payment-per-utility ratio, which also appears elsewhere in the literature, notably in the design of the Method of Equal Shares~\cite{pet-pie-sko:c:participatory-budgeting-cardinal}. The same approach yields analogous extensions of Core+ to additive utilities.

\section{Preliminaries}

A participatory budgeting election, or in short an \myemph{election}, is a triple $E = (C, V, b)$, where $C = \{c_1, c_2, \ldots, c_m\}$ is a nonempty set of $m$ \myemph{candidates}, $V = \{v_1, v_2, \ldots, v_n\}$ is a nonempty set of $n$ \myemph{voters}, and $b$ is a \myemph{budget} value, $b > 0$. For each candidate $c \in C$, let $\cost(c)$ denote its cost, $\cost(c) > 0$, and let $\cost(W) = \sum_{c \in W} \cost(c)$ for each $W \subseteq C$. We say that a subset of candidates $W \subseteq C$ is \myemph{feasible} if $\cost(W) \leq b$. Each voter $v_i \in V$ assigns to each candidate $c \in C$ a nonnegative \myemph{utility}, denoted as $u_i(c)$. We extend this notation to sets; for a voter $v_i\in V$, a group $S\subseteq V$, and a set
$W\subseteq C$, we write
\begin{align*}
    u_i(W)=\sum_{c\in W}u_i(c),
    \qquad
    u_S(c)=\sum_{v_i\in S}u_i(c),
    \qquad
    u_S(W)=\sum_{v_i\in S}u_i(W).
\end{align*}
We say that a subset of candidates $W$ is \myemph{exhaustive} if for each $c \notin W$ we have that $\cost(W) + \cost(c) > b$.

\subsection{Approval Preferences and Committee Elections}

There are two natural subclasses of voters' preferences, where the voters simply indicate which candidates they approve. We say that the voters have \myemph{approval utilities} if for each voter $v_i \in V$ and candidate $c \in C$ either $u_i(c) = 0$ or $u_i(c) = 1$. We say that the voters have \myemph{cost utilities} if for each $v_i \in V$ and $c \in C$ either $u_i(c) = 0$ or $u_i(c) = \cost(c)$. For a discussion of similarities and differences between the two utility models we refer to the paper by \citet{fal-fli-pet-pie-sko-sto-szu-tal:pb-experiments}. 

By $A_i$ we denote the set of candidates approved by voter $v_i$, that is:
\begin{align*}
A_i = \{c \in C \colon u_i(c) > 0\} \text{.}
\end{align*}
Analogously, we define the set of voters who approve candidate $c$ as $V_c$:
\begin{align*}
V_c = \{v_i \in V \colon c \in A_i\} \text{.}
\end{align*}
		
We call an election an instance of \myemph{committee elections} if every candidate has cost equal to one. In other words, in committee elections the goal is to select a subset of at most $b$ candidates.

\subsection{Selection Rules}

A \myemph{selection rule} is a function that for each election $E$ returns a nonempty set of feasible outcomes.\footnote{Typically, we are interested in a single outcome, though we allow for ties.} For example, the \myemph{utilitarian rule} selects feasible subsets $W$ that maximize the total utility $\sum_{v_i \in V}u_i(W)$. Another example is the \myemph{greedy utilitarian rule}, which is an algorithm that starts with an empty outcome $W = \emptyset$. In each iteration, it selects the candidate $c \in C \setminus W$ such that $W \cup \{c\}$ is feasible, and $\sum_{v_i \in V}u_i(c) / \cost(c)$ is maximized, and then adds $c$ to $W$. The algorithm stops when $W$ becomes exhaustive.

\subsubsection*{Proportional Approval Voting}

The \myemph{Proportional Approval Voting} (PAV) rule~\cite{justifiedRepresentation, lac-sko:multiwinner-book} is defined for committee elections with approval utilities. For each subset of candidates $W\subseteq C$, its \myemph{PAV score} is defined as
\begin{align}\label{eq:pav-score}
    \operatorname{sc}_{\mathrm{PAV}}(W) = \sum_{v_i \in V}\sum_{j=1}^{|A_i\cap W|} \frac{1}{j},
\end{align}
where an empty sum is interpreted as zero. PAV selects the feasible committees that maximize this score. 

\begin{example}[Sample PAV Execution]
Consider a committee election with five voters, three candidates $C=\{c_1, c_2, c_3\}$, and a budget of $b=2$. The approval sets are
\begin{align*}
    A_1=A_2=A_3=\{c_1,c_2\},
    \qquad
    A_4=A_5=\{c_3\}.
\end{align*}
The scores of all size-two committees are as follows:
\begin{center}
    \renewcommand{\arraystretch}{1.3}
    \begin{tabular}{lcc}
        \toprule
        Committee $W$
            & $\operatorname{sc}_{\mathrm{PAV}}(W)$
            & $\sum_{v_i\in V}u_i(W)$ \\
        \midrule
        $\{c_1,c_2\}$
            & $3\cdot(1+\frac12)=\frac92$
            & $6$ \\
        $\{c_1,c_3\}$
            & $3\cdot 1+2\cdot 1=5$
            & $5$ \\
        $\{c_2,c_3\}$
            & $3\cdot 1+2\cdot 1=5$
            & $5$ \\
        \bottomrule
    \end{tabular}
\end{center}

Consequently, PAV returns the two committees $\{c_1,c_3\}$ and $\{c_2,c_3\}$. Each of these committees gives every voter exactly one approved representative. In contrast, the utilitarian rule selects $\{c_1,c_2\}$, which maximizes total utility but leaves $v_4$ and $v_5$ without any approved representative. \hfill $\lrcorner$
\end{example}

\subsubsection*{The Method of Equal Shares}

The \myemph{Method of Equal Shares} (MES)~\cite{pet-sko:laminar,pet-pie-sko:c:participatory-budgeting-cardinal} is a sequential rule based on the idea of payments. It starts with an empty outcome $W=\emptyset$ and gives each voter $v_i$ an equal share of the budget, $b_i = \nicefrac{b}{n}$. Throughout the execution, $b_i$ denotes the remaining budget of voter $v_i$.

We say that an unelected candidate $c\in C\setminus W$ is \myemph{$\rho$-affordable}, for $\rho \geq 0$, if
\begin{align}\label{eq:mes-affordability}
    \sum_{v_i \in V}\min\left(b_i, \rho u_i(c)\right) = \cost(c).
\end{align}
Intuitively, the above equality says that the cost of the candidate can be spread among the voters so that the following holds: each voter $v_i$ pays $\rho \cdot u_i(c)$, unless this exceeds her remaining budget, in which case she spends all her remaining money. Thus, voters have possibly equal payment-per-utility ratios, subject to voters' budget constraints.

In each iteration, MES selects a candidate that is $\rho$-affordable for the smallest possible value of $\rho$. It adds this candidate $c$ to $W$ and updates each voter's budget as
follows:
\begin{align*}
    b_i \leftarrow b_i - \min (b_i,\rho u_i(c)).
\end{align*}
If no unelected candidate is $\rho$-affordable for any $\rho\geq 0$, the algorithm stops and returns $W$.

The outcome of Equal Shares does not need to be exhaustive, as the following example illustrates.

\begin{figure}[t]
    \centering
    \begin{tikzpicture}[x=0.85cm,y=0.85cm,font=\small]

        \begin{scope}
            \node at (1.325,3.2) {Initially};
            \foreach \x/\i in {0/1,1/2,2/3} {
                \filldraw[fill=black!15,draw=black]
                    (\x,0) rectangle ++(0.65,2);
                \node[below] at ({\x+0.325},0) {$v_{\i}$};
                \node[above] at ({\x+0.325},2.1) {$2$};
            }
        \end{scope}

        \begin{scope}[shift={(5,0)}]
            \node at (1.325,3.2) {After selecting $c_2$};
            \foreach \x/\h/\i in {0/2/1,1/1/2,2/1/3} {
                \draw[dashed,black!50]
                    (\x,0) rectangle ++(0.65,2);
                \filldraw[fill=black!15,draw=black]
                    (\x,0) rectangle ++(0.65,\h);
                \node[below] at ({\x+0.325},0) {$v_{\i}$};
                \node[above] at ({\x+0.325},2.1) {$\h$};
            }
        \end{scope}

        \begin{scope}[shift={(10,0)}]
            \node at (1.325,3.2) {After selecting $c_1$};
            \foreach \x/\h/\i/\lab in {
                0/0.666667/1/{\frac{2}{3}},
                1/0.333333/2/{\frac{1}{3}},
                2/1/3/{1}
            } {
                \draw[dashed,black!50]
                    (\x,0) rectangle ++(0.65,2);
                \filldraw[fill=black!15,draw=black]
                    (\x,0) rectangle ++(0.65,\h);
                \node[below] at ({\x+0.325},0) {$v_{\i}$};
                \node[above] at ({\x+0.325},2.1) {$\lab$};
            }
        \end{scope}

        \draw[->,thick] (3.05,1.15) -- (4.55,1.15)
            node[midway,above] {$\rho=\frac{1}{2}$};
        \draw[->,thick] (8.05,1.15) -- (9.55,1.15)
            node[midway,above] {$\rho=\frac{2}{3}$};
    \end{tikzpicture}
    \caption{An execution of the Method of Equal Shares.
        Filled bars show remaining budgets; dashed outlines mark
        the initial budget of $2$. The numbers above the bars give
        the remaining balances. MES selects $c_2$, then $c_1$, and
        stops with $2$ budget units unspent.}
    \label{fig:mes-example}
\end{figure}

\begin{example}[Sample MES Execution]
Consider an election with three voters, four candidates, and a budget of $b=6$. The costs and utilities are given below:
\begin{center}
    \renewcommand{\arraystretch}{1.15}
    \begin{tabular}{lcccc}
        \toprule
        Candidate & $c_1$ & $c_2$ & $c_3$ & $c_4$ \\
        \midrule
        Cost  & $2$ & $2$ & $2$ & $3$ \\
        $u_1$ & $2$ & $0$ & $1$ & $0$ \\
        $u_2$ & $1$ & $2$ & $0$ & $1$ \\
        $u_3$ & $0$ & $2$ & $1$ & $1$ \\
        \bottomrule
    \end{tabular}
\end{center}

Initially, each voter has a budget of $2$. The smallest values of $\rho$ for which $c_1,c_2,c_3,c_4$ are $\rho$-affordable are, respectively, $\nicefrac{2}{3}$, $\nicefrac{1}{2}$, $1$, and $\nicefrac{3}{2}$. Hence, MES first selects $c_2$. Voters $v_2$ and $v_3$ each pay $1$, while $v_1$ pays nothing. Their remaining budgets are therefore $(b_1,b_2,b_3)=(2,1,1)$.

In the next iteration, $c_1$ is still $\nicefrac{2}{3}$-affordable, $c_3$ is $1$-affordable, and $c_4$ is no longer affordable: its supporters have only the following amount of remaining money: $b_2+b_3=2<3=\cost(c_4)$. Thus, MES selects $c_1$. Voters $v_1$ and $v_2$ pay $\nicefrac{4}{3}$ and $\nicefrac{2}{3}$, respectively.
The remaining budgets become $(b_1,b_2,b_3)=\left(\nicefrac{2}{3}, \nicefrac{1}{3}, 1\right)$. The supporters of $c_3$ now have only $b_1 + b_3 = \nicefrac{5}{3} < 2$ remaining money, while those of $c_4$ have only $b_2 + b_3 = \nicefrac{4}{3} < 3$. Consequently, MES stops and returns $W=\{c_1, c_2\}$. \Cref{fig:mes-example} illustrates the changes in voters' budgets.

Although $\cost(W) = 4$ and the total remaining budget is $2$, candidate $c_3$ cannot be selected by MES: its supporters cannot cover its cost. Thus, the returned outcome is not exhaustive. \hfill $\lrcorner$
\end{example}

\section{The Axiom of EJR+ for Additive Utilities}
		
We start by considering the extensions of the axiom of Extended Justified Representation (EJR)~\cite{justifiedRepresentation,pet-pie-sko:c:participatory-budgeting-cardinal}.
While EJR is one of the most fundamental properties considered in the context of proportionality, it has two major drawbacks. First, verifying whether a given outcome satisfies EJR is coNP-hard~\cite{justifiedRepresentation}. This makes the axiom impractical for empirical evaluation of election rules: it is infeasible to check how often certain rules violate EJR in data. This limitation is critical when we are interested in rules that do not satisfy EJR in the worst case but still adhere to the axiom for the vast majority of election instances. Second, the axiom often fails to eliminate many disproportional outcomes, as large cohesive groups are rare in practice~\cite{bre-fal-kac-nie2019:experimental_ejr}.
		
The recently proposed axiom of EJR+~\cite{BP-ejrplus} addresses these two issues: it is verifiable in polynomial time and provides strictly stronger guarantees without referring to the concept of cohesiveness. Yet, the axiom has been proposed only for approval and cost utilities, and it remained an open problem how to extend it to additive valuations. In this section, we positively resolve this issue. Let us start by first recalling the definition of EJR+ for approval and cost utilities, and then by introducing a slightly weaker variant of the axiom of EJR+ to build a better intuition.
		
\begin{definition}[EJR+ for Approval Committee Elections~\cite{BP-ejrplus}]\label{def:standard-ejr-plus}
	Given a committee election $E = (C, V, b)$ with approval utilities, we say that an outcome $W \subseteq C$ satisfies EJR+ if for each $\ell \in \naturals$, each group of voters $S \subseteq V$ with $|S| \geq \ell \cdot \nicefrac{n}{b}$, and each candidate $c \in \cap_{v_i \in S}A_i \setminus W$ there is a voter $v_i \in S$ with $|A_i \cap W| \geq \ell$. \hfill $\lrcorner$
\end{definition}
		
\begin{definition}[EJR+ up-to-one for Cost Utilities~\cite{BP-ejrplus}]
	Given an election $E = (C, V, b)$ with cost utilities, an outcome $W \subseteq C$ satisfies EJR+ up-to-one if for each group of voters $S \subseteq V$ and each candidate $c \in \cap_{v_i \in S}A_i \setminus W$ there is a voter $v_i \in S$ such that:
	\begin{align*}
		\cost(A_i \cap W) + \cost(c) > |S| \cdot \nicefrac{b}{n} \text{.}\tag*{$\lrcorner$}
	\end{align*}
\end{definition}
		
\subsection{Weak Extended Justified Representation Plus (Weak EJR+)}
The following calculations provide intuition for our formulation of the axiom; they do not yet constitute formal proofs. The underlying idea is that the cost of each selected project should be shared among its supporters in proportion to their utilities for that project, and that voters' total spending should be as equal as possible. This idea is reflected, in particular, in the formulation of the Method of Equal Shares~\cite{pet-sko:laminar, pet-pie-sko:c:participatory-budgeting-cardinal}, and here we will build on it, in order to recover the formulation of the particular EJR+ axioms.
		
Consider an unelected candidate $c \in C$ and a group of voters $S$. If we spread the cost of $c$ proportionally among the voters from $S$, then each voter $v_i \in S$ would pay:
\begin{align*}
	p_i(c) = \cost(c) \cdot \frac{u_i(c)}{\sum_{v_j \in S} u_j(c)} \text{.}
\end{align*}
Then, the ratio of the price per gained utility would be equal among the voters from $S$. Let us denote this value as $\rho(c)$:
\begin{align*}
	\rho(c) = \frac{p_i(c)}{u_i(c)} = \frac{\cost(c)}{\sum_{v_j \in S} u_j(c)} \text{.}
\end{align*}
Assuming that the voters made the purchases in $W$ as efficiently as if they bought $c$, each such voter $i \in S$ would need to spend the following amount of money on $W$:
\begin{align*}
	m_i \leq u_i(W) \cdot \rho(c) =   u_i(W) \cdot \frac{\cost(c)}{\sum_{v_j \in S} u_j(c)} \text{.}
\end{align*}
If they were about to buy additionally $c$, then each of them would spend $m_i + p_i(c)$. Since $c$ has not been eventually selected, such a purchase should not be possible. Hence, for at least one voter $i \in S$ it must hold that:
\begin{align*}
	m_i + p_i(c) > \frac{b}{n} \text{.}
\end{align*}
This leads us to the following formulation of the axiom of EJR+ for additive utilities.
		
\begin{definition}[Weak EJR+ for Additive Utilities]
	An outcome $W \subseteq C$ satisfies weak EJR+ if for each unelected candidate $c \notin W$ and each group $S \subseteq V$ with $u_S(c) > 0$ there exists a voter $v_i \in S$ such that:
	\begin{align}\label{eq:weak-ejr}
		\frac{\cost(c)}{\sum_{v_j \in S} u_j(c)} \cdot u_i(W) \geq \frac{b}{n} \text{.}
	\end{align}
	An outcome $W$ satisfies weak EJR+ up-to-one if for all $c \notin W, S \subseteq V$ with $u_S(c) > 0$, there is $v_i \in S$ such that
	\begin{align}\label{eq:weak-ejr-up-to-one}
		\frac{\cost(c)}{\sum_{v_j \in S} u_j(c)} \cdot (u_i(c) + u_i(W)) > \frac{b}{n} \text{.}
	\end{align}
	\hfill $\lrcorner$
\end{definition}
		
\begin{theorem}
	Weak EJR+ up-to-one is equivalent to EJR+ for committee elections with approval utilities and to EJR+ up-to-one for cost utilities.
\end{theorem}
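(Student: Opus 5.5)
The plan is to unfold the weak EJR+ up-to-one condition in both special cases, and first show that it is enough to consider groups $S$ whose members all approve $c$. After that reduction, the condition in \eqref{eq:weak-ejr-up-to-one} becomes, after a short rearrangement, exactly the known axiom.

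\emph{Step 1 (reduction to $S\subseteq V_c$).} Fix $c\notin W$ and a group $S$ with $u_S(c)>0$. Let $S'=S\cap V_c$. Then $u_{S'}(c)=u_S(c)>0$, so the left-hand side of \eqref{eq:weak-ejr-up-to-one} is the same for $S$ and $S'$ for every voter. Since $S'\subseteq S$, a witness for $S'$ is also a witness for $S$. Conversely, every $S'\subseteq V_c$ with $S'\neq\emptyset$ has $u_{S'}(c)>0$ and so is itself quantified over. Hence weak EJR+ up-to-one is equivalent to requiring \eqref{eq:weak-ejr-up-to-one} only for nonempty $S\subseteq V_c$, i.e., for groups with $c\in\bigcap_{v_i\in S}A_i$. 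These are precisely the groups quantified over in both EJR+ definitions. (The empty group is vacuous or excluded on both sides.)

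\emph{Step 2 (cost utilities).} For $S\subseteq V_c$ we have $u_i(c)=\cost(c)$ and $u_S(c)=|S|\cost(c)$. Also $u_i(W)=\cost(A_i\cap W)$. So \eqref{eq:weak-ejr-up-to-one} reads
\[
\frac{\cost(c)+\cost(A_i\cap W)}{|S|} > \frac{b}{n}.
\]
This is literally the EJR+ up-to-one condition for cost utilities.

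\emph{Step 3 (approval committee elections).} Here $\cost(c)=1$, $u_i(c)=1$ and $u_S(c)=|S|$ for $S\subseteq V_c$, and $u_i(W)=|A_i\cap W|$. The condition becomes: some $v_i\in S$ has
\[
|A_i\cap W| > |S|\cdot\nicefrac{b}{n}-1.
\]
Because $|A_i\cap W|$ is an integer, this is equivalent to $|A_i\cap W|\ge \lfloor |S|b/n\rfloor$. Check the two cases: if $x=|S|b/n$ is an integer, then $>x-1$ means $\ge x$; otherwise $>x-1$ means $\ge\lceil x-1\rceil=\lfloor x\rfloor$.

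On the EJR+ side, the requirement for a fixed $S$ and $c$ is monotone in $\ell$. The admissible values are those with $\ell\le |S|b/n$, so the binding one is $\ell^*=\lfloor |S|b/n\rfloor$. If $\ell^*=0$, both conditions are trivially satisfied. So EJR+ for $(S,c)$ is the same as requiring some $v_i\in S$ with $|A_i\cap W|\ge \ell^*$.

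Combining this with Step 1 gives the equivalence. The only delicate points are the floor/strict-inequality bookkeeping in Step 3 and checking that restricting to $S\subseteq V_c$ loses nothing. I expect the integer rounding to be the main thing to get exactly right.
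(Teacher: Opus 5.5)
Your proof is correct and follows the paper's argument: restrict to groups $S\subseteq V_c$, then specialize the inequality separately for cost utilities and for approval committee elections. Your rounding step, $|A_i\cap W| > |S|b/n-1 \iff |A_i\cap W|\ge\lfloor |S|b/n\rfloor$ together with the case $\lfloor |S|b/n\rfloor=0$, is slightly more careful than the paper's substitution $|S|=\ell\cdot n/b$.
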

\begin{proof}
	Let us first consider the setting of committee elections with approval utilities. Then Inequality \eqref{eq:weak-ejr-up-to-one} boils down to:
	\begin{align*}
		\frac{1}{|V_c \cap S|} \cdot (|W \cap A_i| + 1) > \frac{b}{n} \text{.}
	\end{align*}
	Note that it is sufficient to consider only $S \subseteq V_c$. Indeed consider a set $S$ with $S \setminus V_c \neq \emptyset$. If the condition is satisfied for $S \cap V_c$, then it would also be satisfied for $S$ (the denominator does not change, and the inequality has to be satisfied for at least one voter from a larger set). If it fails for $S \cap V_c$, then the axiom is violated (since the definition requires it must be satisfied for all sets $S$). Thus, the definition of weak EJR+ for additive utilities says that for each group of voters $S \subseteq V$ and for each candidate $c \in \cap_{v_i \in S}A_i \setminus W$ there exists a voter $v_i \in S$ such that:
	\begin{align*}
		|W \cap A_i| + 1 > \frac{b}{n} \cdot |S| \text{.}
	\end{align*}
	Let $\ell$ be the largest value such that $|S| \geq \ell \cdot \nicefrac{n}{b}$. Then the condition above is equivalent to:
	\begin{align*}
		|W \cap A_i| \geq \left\lfloor \frac{b}{n} \cdot |S| \right\rfloor = \left\lfloor \frac{b}{n} \cdot \ell \cdot \nicefrac{n}{b} \right\rfloor = \ell\text{.}
	\end{align*}
	This shows the equivalence with EJR+ for approval committee elections.
			
	Let us now move to the case of cost utilities. Now, Inequality \eqref{eq:weak-ejr-up-to-one} boils down to:
	\begin{align*}
		\frac{b}{n} &<  \frac{\cost(c)}{\sum_{v_j \in S} u_j(c)} \cdot (u_i(c) + u_i(W))  \\
			& = \frac{\cost(c)}{\sum_{v_j \in S \cap V_c} \cost(c)} \cdot (u_i(c) + u_i(W)) \\
			& = \frac{1}{|S \cap V_c|} \cdot (u_i(c) + u_i(W)) \text{.}
	\end{align*}
	Again, we can consider only $S \subseteq V_c$. Thus, the axiom of weak EJR+ up-to-one for additive utilities says that for each group of voters $S$ and each candidate $c \in \cap_{v_i \in S}A_i \setminus W$ there is a voter $v_i \in S$ such that:
	\begin{align*}
		\frac{b}{n} \cdot |S| <  u_i(c) + u_i(W) = \cost(c) + \cost(W \cap A_i) \text{.}
	\end{align*}
	This shows the equivalence with the axiom of EJR+ up-to-one for cost utilities.
\end{proof}
		
\subsection{Strong Extended Justified Representation Plus (Strong EJR+)}
		
We will now propose a strengthening of the axiom. Again, consider an unelected candidate $c \in C$. We observe that in our intuitive reasoning the payment of a voter $i \in S$ for an elected candidate $c'$ can be bounded using a stronger inequality. So far we assumed that:
\begin{align*}
	p_i(c') \leq u_i(c') \cdot \rho(c) =  u_i(c') \cdot \frac{\cost(c)}{\sum_{v_j \in S} u_j(c)} \text{.}
\end{align*}
Here, we additionally require that: 
\begin{align*}
	p_i(c') \leq u_i(c') \cdot \frac{\cost(c')}{\sum_{v_j \in S} u_j(c')} \text{.}
\end{align*}
Thus, we can use the following bound on the total spending of a given voter, $m_i$:
\begin{align*}
	m_i \leq \sum_{c' \in W} u_i(c') \cdot \min \left( \frac{\cost(c')}{\sum_{j \in S} u_j(c')},  \frac{\cost(c)}{\sum_{v_j \in S} u_j(c)}\right) \text{.}
\end{align*}
This leads us to the following definition of strong EJR+.
		
\begin{definition}[Strong EJR+ for Additive Utilities]\label{def:strong-ejr-plus}
	An outcome $W \subseteq C$ of an election $E = (C, V, b)$ satisfies strong EJR+ if for each unelected candidate $c$ and each group of voters $S \subseteq V$ with $u_S(c) > 0$ there exists a voter $v_i \in S$ such that:
	\begin{align}\label{eq:ejr}
		\sum_{\substack{c' \in W \\ u_S(c') > 0}} u_i(c') \cdot \min \left( \frac{\cost(c')}{\sum_{v_j \in S} u_j(c')},  \frac{\cost(c)}{\sum_{j \in S} u_j(c)}\right)  \geq \frac{b}{n} \text{.}
	\end{align}
	An outcome satisfies strong EJR+ up-to-one if for each $c \notin W$ and each $S \subseteq V$ with $u_S(c) > 0$ there exists $v_i \in S$ such that:
	\begin{align}\label{eq:ejr-up-to-one}
		\sum_{\substack{c' \in W\cup \{ c \} \\ u_S(c') > 0}} u_i(c') \cdot \min \left( \frac{\cost(c')}{\sum_{v_j \in S} u_j(c')},  \frac{\cost(c)}{\sum_{j \in S} u_j(c)}\right)  > \frac{b}{n} \text{.}
	\end{align}
\hfill $\lrcorner$
\end{definition}
	
\noindent
For approval utilities and cost utilities, Inequality \eqref{eq:ejr} simplifies to somewhat similar conditions:
\begin{align*}
	\underbrace{\sum_{c' \in W \cap A_i} \min\left(\frac{ \cost(c')}{|S\cap V_{c'}|}, \frac{\cost(c)}{|S\cap V_{c}|}\right) \geq \frac{b}{n}}_{\text{approval utilities}}
	\qquad \underbrace{\sum_{c' \in W \cap A_i} \min\left(\frac{ \cost(c')}{|S\cap V_{c'}|}, \frac{\cost(c')}{|S\cap V_{c}|}\right) \geq \frac{b}{n}}_{\text{cost utilities}}\text{.}
\end{align*}

We will now establish the relation between the axiom of EJR+ and EJR for additive utilities. We start by recalling the definition of the EJR axiom for additive utilities~\cite{pet-pie-sko:c:participatory-budgeting-cardinal}. 

\begin{definition}[Additive EJR and EJR up to one]\label{def:ejr}
We say that a nonempty set of voters $S \subseteq V$ is \emph{$(\alpha,T)$-cohesive} for $T\subseteq C$ and $\alpha \colon T \to \reals_{\ge 0}$, if $\cost(T)\le b \cdot \nicefrac{|S|}{n}$ and $u_i(c) \ge \alpha(c)$ for all $v_i \in S$ and $c \in T$.

A feasible outcome $W$ satisfies \emph{EJR} if for every $(\alpha, T)$-cohesive group of voters $S$, there exists $v_i \in S$ such that $u_i(W)\ge \sum_{c \in T} \alpha(c)$.
We say that $W$ satisfies \emph{EJR up to one} if for every such group of voters~$S$ there exists $v_i \in S$ such that either $u_i(W)\ge \sum_{c \in T} \alpha(c)$, or $u_i(W) + u_i(c') > \sum_{c \in T} \alpha(c)$ for some $c' \in T\setminus W$. \hfill $\lrcorner$
\end{definition}

\begin{theorem}
    Strong EJR+ (up-to-one) implies EJR (up-to-one) for additive utilities.
\end{theorem}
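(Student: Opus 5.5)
The plan is to argue by contradiction, applying the strong EJR+ condition to the cohesive group $S$ itself and to one carefully chosen candidate of $T\setminus W$. Fix an $(\alpha,T)$-cohesive group $S$ and suppose that every $v_i\in S$ has $u_i(W)<\sum_{c\in T}\alpha(c)$. If $T\subseteq W$, then every $v_i\in S$ has $u_i(W)\ge u_i(T)\ge\sum_{c\in T}\alpha(c)$, so we may assume $T\setminus W\neq\emptyset$. Among the candidates $c\in T\setminus W$ with $\alpha(c)>0$, choose one maximizing $\alpha(c)/\cost(c)$. Cohesiveness gives $u_S(c)\ge|S|\alpha(c)>0$, so $S$ and $c$ are admissible in \Cref{def:strong-ejr-plus}. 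Write $\rho_{c'}=\cost(c')/u_S(c')$ for every $c'$ with $u_S(c')>0$. Then $\rho_c\le\cost(c)/(|S|\alpha(c))$, and by the choice of $c$ this is at most $\cost(T\setminus W)/(|S|\,\alpha(T\setminus W))$.

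The main estimate bounds, for each $v_i\in S$, the left-hand side of \eqref{eq:ejr}, which I denote $L_i$, and shows $L_i<\nicefrac{b}{n}$; this contradicts strong EJR+. I would split $W$ into $W\setminus T$ and $W\cap T$.
\begin{itemize}
\item On $W\setminus T$, use the bound $\min(\rho_{c'},\rho_c)\le\rho_c$.
\item On $W\cap T$, use the bound $\min(\rho_{c'},\rho_c)\le\rho_{c'}$ together with $u_i(c')\ge\alpha(c')$.
\end{itemize}
From the assumption $u_i(W)<\alpha(T)$ and $u_i(c')\ge\alpha(c')$ on $T\cap W$, we obtain $u_i(W\setminus T)<\alpha(T\setminus W)$. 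The first part is therefore strictly less than $\rho_c\,\alpha(T\setminus W)\le\cost(T\setminus W)/|S|$. If the second part can be bounded by $\cost(T\cap W)/|S|$, then $L_i<\cost(T)/|S|\le\nicefrac{b}{n}$, which is the desired contradiction.

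For the up-to-one version, the same scheme is run with \eqref{eq:ejr-up-to-one}. The negated conclusion gives $u_i(W)+u_i(c)\le\alpha(T)$ for the chosen $c$. The extra summand $u_i(c)\min(\rho_c,\rho_c)=u_i(c)\rho_c$ is absorbed into the first part, since $c\notin W$ and $c\in T$. The strict inequality is replaced by a non-strict one, which matches the strict ``$>$'' in \eqref{eq:ejr-up-to-one}.

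I expect the main obstacle to be the second part, i.e., the candidates of $T$ that are already elected. The quantity $u_i(c')\rho_{c'}$ is voter $v_i$'s proportional share of $\cost(c')$ within $S$, and it can exceed $\cost(c')/|S|$ for voters with high utility. My first attempt would be to weaken the target: instead of a bound for every voter, I would compare the voter minimizing the left-hand side of \eqref{eq:ejr} with the average over $S$. Summing over $S$ gives exactly $\sum_{v_i\in S}u_i(c')\rho_{c'}=\cost(c')$, so on average the second part is at most $\cost(T\cap W)/|S|$. To combine this with the per-voter bound on $W\setminus T$, I would choose $c$ as above and use $\min(\rho_{c'},\rho_c)\le\rho_c$ on $W\setminus T$ but $\le\rho_{c'}$ on $W\cap T$, and then sum the whole inequality over $S$. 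Because strong EJR+ only guarantees one voter meeting the threshold, the averaging must be set up to contradict the existence of a single voter with $L_i\ge\nicefrac{b}{n}$. If this does not close the argument directly, I would pass to the reduced cohesive structure $(\alpha|_{T\setminus W},\,T\setminus W)$ on $S$. There $W\cap T=\emptyset$ and the first estimate alone suffices. The remaining task would be to recover the missing $\alpha(T\cap W)$ from $u_i(W\cap T)\ge\alpha(T\cap W)$.
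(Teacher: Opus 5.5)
\textbf{There is a genuine gap.} Your framework matches the paper's: argue by contradiction, apply strong EJR+ to $S$ and one $c\in T\setminus W$, and split $W$ into $W\setminus T$ and $W\cap T$. Your choice of $c$, maximizing $\alpha(c)/\cost(c)$, works just as well as the paper's choice, minimizing $\cost(c)/u_S(c)$, since both give $\rho_c\,\alpha(T\setminus W)\le\cost(T\setminus W)/|S|$. However, the step you flag as the main obstacle is left open, and neither fallback closes it.

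Averaging cannot work, because strong EJR+ is existential in the voter. To contradict it you need $L_i<\nicefrac{b}{n}$ for \emph{every} $v_i\in S$. An average bound only shows that \emph{some} voter is below the threshold, which the axiom allows. Passing to $(\alpha|_{T\setminus W},T\setminus W)$ charges all of $W$ at rate $\rho_c$. It therefore yields only a voter with $u_i(W)\ge\alpha(T\setminus W)$, and adding back $\alpha(T\cap W)$ is the original difficulty again.

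The information is lost earlier, when you weaken $u_i(W)<\alpha(T)$ to $u_i(W\setminus T)<\alpha(T\setminus W)$. This discards the surplus $u_i(W\cap T)-\alpha(T\cap W)\ge 0$, and that surplus is exactly what covers a high-utility voter's above-average share of the elected $T$-candidates. The missing idea, which is the paper's argument, is to split each $c'\in W\cap T$ with $u_S(c')>0$ into its guaranteed part $\alpha(c')$ and its surplus. Charge the guaranteed part at rate $\rho_{c'}$ and the surplus at rate $\rho_c$:
\[
u_i(c')\min(\rho_{c'},\rho_c)\le\alpha(c')\,\rho_{c'}+\bigl(u_i(c')-\alpha(c')\bigr)\rho_c\le\frac{\cost(c')}{|S|}+\bigl(u_i(c')-\alpha(c')\bigr)\rho_c .
\]
The last step uses $u_S(c')\ge|S|\alpha(c')$. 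Adding $\rho_c\,u_i(W\setminus T)$ for the rest of $W$ gives, for every $v_i\in S$,
\begin{align*}
L_i &\le \frac{\cost(T\cap W)}{|S|}+\rho_c\bigl(u_i(W)-\alpha(T\cap W)\bigr)\\
&= \frac{\cost(T\cap W)}{|S|}+\rho_c\,\alpha(T\setminus W)+\rho_c\bigl(u_i(W)-\alpha(T)\bigr)\le\frac{b}{n}+\rho_c\bigl(u_i(W)-\alpha(T)\bigr).
\end{align*}
The negated EJR, $u_i(W)<\alpha(T)$ for all $v_i\in S$, now gives $L_i<\nicefrac{b}{n}$ uniformly over $S$. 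In the up-to-one case, the extra summand $\rho_c u_i(c)$ changes the last bracket to $u_i(W)+u_i(c)-\alpha(T)$, which the negated EJR up to one makes $\le 0$.

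A minor point: if every candidate of $T\setminus W$ has $\alpha=0$, your choice of $c$ is undefined. In that case $u_i(W)\ge u_i(W\cap T)\ge\alpha(T)$ holds trivially; the paper avoids the case by first deleting zero-$\alpha$ candidates from $T$.
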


\begin{proof}
Let $W$ be a feasible outcome, and let us fix an $(\alpha,T)$-cohesive group of voters $S$. Without loss of generality we may assume that $\alpha(d) > 0$ for every $d \in T$. Indeed, removing candidates $d$ with $\alpha(d)=0$ preserves cohesiveness and does not change the target utility $\sum_{d \in T}\alpha(d)$.

If $T \subseteq W$, then every voter $v_i \in S$ satisfies
\begin{align*}
	u_i(W) \geq \sum_{d \in T}u_i(d) \geq \sum_{d \in T}\alpha(d),
\end{align*}
so the conclusions for EJR and EJR up-to-one follow. Hence, we can assume that $T \setminus W \neq \emptyset$.
Let $c^\star$ be the candidate from $T \setminus W$ with the minimal ratio of $\nicefrac{\cost(c^*)}{u_S(c^*)}$. Let 
\begin{align*}
        \gamma := \frac{\cost(c^\star)}{u_S(c^\star)} = \min_{d \in T \setminus W} \frac{\cost(d)}{u_S(d)}.
\end{align*}

For every voter $v_i \in S$, we have
\begin{align}
\begin{split}
&\sum_{d \in W} u_i(d) \cdot \min\left(\frac{\cost(d)}{u_S(d)},\gamma \right) \leq \sum_{d \in T \cap W} u_i(d) \cdot \min\left(\frac{\cost(d)}{u_S(d)},\gamma \right) + \sum_{d \in W \setminus T} u_i(d) \cdot \gamma \\
&\qquad\qquad\leq \sum_{d \in T \cap W} \alpha(d) \cdot \min\left(\frac{\cost(d)}{u_S(d)},\gamma \right) + \gamma \cdot \left(u_i(W) -\sum_{d \in T \cap W}\alpha(d) \right)\\ 
&\qquad\qquad = \sum_{d \in T \cap W} \alpha(d) \cdot \min\left(\frac{\cost(d)}{u_S(d)},\gamma \right)+ \gamma \cdot \sum_{d \in T \setminus W}\alpha(d)+ \gamma \cdot \left(u_i(W)-\sum_{d \in T}\alpha(d) \right)\\
&\qquad\qquad \leq \sum_{d \in T \cap W} \alpha(d) \cdot \frac{\cost(d)}{u_S(d)} + \sum_{d \in T \setminus W}\alpha(d) \cdot \frac{\cost(d)}{u_S(d)}  + \gamma \cdot \left(u_i(W)-\sum_{d \in T}\alpha(d) \right)\\
 &\qquad\qquad = \sum_{d \in T} \alpha(d)\cdot \frac{\cost(d)}{u_S(d)} + \gamma \cdot \left(u_i(W) - \sum_{d \in T}\alpha(d) \right) \leq \sum_{d \in T} \alpha(d)\cdot \frac{\cost(d)}{|S|\alpha(d)} + \gamma \cdot \left(u_i(W) - \sum_{d \in T}\alpha(d) \right)\\
 &\qquad\qquad \leq \frac{\cost(T)}{|S|}+ \gamma\left(u_i(W) - \sum_{d \in T}\alpha(d) \right) \leq \frac{b}{n} + \gamma\left( u_i(W) - \sum_{d \in T}\alpha(d) \right).
\end{split}
\label{eq:strong-ejr-utility-bound}
\end{align}

We will now prove the two implications separately, starting from the exact case, and then moving to the up-to-one case.
    
\emph{Exact case.} Suppose that $W$ satisfies strong EJR+. Applying the axiom to $S$ and $c^\star$, we obtain that there is a voter $v_i \in S$ such that
\begin{align*}
   \sum_{d \in W} u_i(d) \min \left(\frac{\cost(d)}{u_S(d)},\gamma \right) \geq \frac{b}{n} \text{.}
\end{align*}
Combining this inequality with \eqref{eq:strong-ejr-utility-bound} yields
\begin{align*}
  \frac{b}{n} \leq \frac{b}{n} + \gamma\left(u_i(W) - \sum_{d \in T}\alpha(d) \right) \text{.}
\end{align*}
Since $\gamma > 0$, it follows that
\begin{align*}
        u_i(W)\geq \sum_{d \in T}\alpha(d).
\end{align*}
Thus, $W$ satisfies EJR.

 \emph{Up-to-one case.} Suppose that $W$ satisfies strong EJR+ up to one. Applying this axiom to $S$ and $c^\star$, and combining it with \eqref{eq:strong-ejr-utility-bound} we obtain a voter $v_i \in S$ for whom
 \begin{align*}
    \frac{b}{n} &< \sum_{d \in W} u_i(d) \cdot \min\left( \frac{\cost(d)}{u_S(d)},\gamma \right) + \gamma u_i(c^\star) \\
        & \leq \frac{b}{n} + \gamma \cdot \left(u_i(W) + u_i(c^\star) - \sum_{d \in T}\alpha(d) \right) \text{,}
\end{align*}
Since $\gamma>0$, we conclude that
 \begin{align*}
	u_i(W) + u_i(c^\star) > \sum_{d \in T}\alpha(d).
 \end{align*}
As $c^\star \in T \setminus W$, this establishes EJR up to one.
\end{proof}

The demanding requirements of strong EJR+ up-to-one raise a natural question: is the axiom always satisfiable? We answer this question affirmatively by showing that the Method of Equal Shares satisfies it. In contrast, our exact formulation of EJR+ is not always satisfiable, even for approval committee elections. This justifies our future focus on the "up-to-one" variant.
		
\begin{theorem}\label{thm:mes-strong-ejr-plus}
    The Method of Equal Shares satisfies strong EJR+ up-to-one.
\end{theorem}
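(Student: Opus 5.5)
We plan to fix an unelected candidate $c \notin W$ and a group $S$ with $u_S(c) > 0$, set $\rho^\star := \nicefrac{\cost(c)}{u_S(c)}$, and find a voter $v_i \in S$ satisfying Inequality~\eqref{eq:ejr-up-to-one}. The $c$-term of that sum equals $u_i(c)\cdot \min(\rho^\star,\rho^\star) = \rho^\star u_i(c)$. So it suffices to find $v_i \in S$ with:
\begin{compactenum}[(a)]
\item $v_i$ ended with remaining budget $b_i < \rho^\star u_i(c)$; then $v_i$ spent more than $\nicefrac{b}{n} - \rho^\star u_i(c)$ in total;
\item every payment $v_i$ made for some $c' \in W$ is at most $u_i(c') \cdot \min\bigl(\nicefrac{\cost(c')}{u_S(c')}, \rho^\star\bigr)$.
\end{compactenum}
Adding (a) and (b) gives the strict inequality. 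A payment with $u_i(c')=0$ is zero, so candidates with $u_S(c')=0$ can be dropped as in the definition.

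\emph{Choosing the time.} Consider the run of MES round by round, and let $t$ be the first round after which some voter $v_j \in S$ has $b_j < \rho^\star u_j(c)$. Such a round exists, because at termination $c$ is not affordable for any $\rho$. If every $v_j \in S$ still had $b_j \ge \rho^\star u_j(c)$, then
\begin{align*}
\sum_{v_j \in S}\min\bigl(b_j,\rho^\star u_j(c)\bigr) = \rho^\star u_S(c) = \cost(c).
\end{align*}
Adding the nonnegative contributions of voters outside $S$ and using continuity in $\rho$, $c$ would be $\rho$-affordable for some $\rho \le \rho^\star$, a contradiction. The same computation shows that before every round $r \le t$, $c$ is $\rho$-affordable for some $\rho \le \rho^\star$. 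Since MES picks the minimal $\rho$, the candidate $c'$ bought in round $r$ has $\rho_r \le \rho^\star$, so each payment is $\min(b_i,\rho_r u_i(c')) \le \rho^\star u_i(c')$. We take $v_i$ to be a voter of $S$ that crosses its threshold in round $t$; then (a) holds for $v_i$, since its budget only decreases afterwards.

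\emph{Bounding payments by $\nicefrac{\cost(c')}{u_S(c')}$.} It remains to show $\min(b_i,\rho_r u_i(c')) \le u_i(c')\,q'$ with $q' := \nicefrac{\cost(c')}{u_S(c')}$ for rounds $r \le t$. Rounds after $t$ also contribute payments and have to be handled separately; see below. If $\rho_r \le q'$ we are done. Otherwise, the supporters in $S$ pay in total less than $\rho_r u_S(c')$ would give. Since $\rho_r u_S(c') > q' u_S(c') = \cost(c')$, the budget cap must bind for some $v_j \in S$ with $u_j(c') > 0$: that voter pays its whole remaining budget and is left with $0$. If $u_j(c) > 0$, then $v_j$ crosses its threshold in this round, so this round is $t$. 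The choice of the voter is flexible in round $t$, so we pick such a $v_j$ as $v_i$; its total spending is then exactly $\nicefrac{b}{n}$. We then need only the bounds from the earlier rounds together with its final payment, and that final payment we plan to bound by $\rho^\star u_i(c')$.

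\emph{Main obstacle.} The delicate points are two. First, the capped voter may have $u_j(c) = 0$, so it never crosses a threshold. Second, the payments of $v_i$ after round $t$ must still be controlled. For both we would try a cleaner potential-style argument: among the voters of $S$, compare each actual payment with the ``fair'' payment $u_j(c')\min(q',\rho^\star)$. We would then show, by summing over $S$ and using $\sum_{v_j \in S}u_j(c') q' = \cost(c')$, that some voter who ends below its $c$-threshold never overpaid relative to the fair payment. If that fails, we would restrict $S$ to $S \cap \{v_j : u_j(c) > 0\}$. This only shrinks $u_S(c')$ and hence weakens the target inequality, so we must check that the restriction is harmless.
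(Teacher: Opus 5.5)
Your proposal has a genuine gap. Both ``delicate points'' are left open, and the first cannot be repaired within your strategy: the voter certifying \eqref{eq:ejr-up-to-one} may have to be a member of $S$ with $u_i(c)=0$, and such a voter never crosses a $c$-threshold.

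Concretely, let $n=2$, $b=2$, and take candidates $e,d,c$ with $\cost(e)=\cost(d)=1$ and $\cost(c)=\nicefrac14$. Let $u_1(e)=0$, $u_1(d)=1$, $u_1(c)=\nicefrac18$, and $u_2(e)=4$, $u_2(d)=1$, $u_2(c)=0$. MES first buys $e$ at $\rho=\nicefrac14$; it is paid entirely by $v_2$, who is left with $0$. MES then buys $d$ at $\rho=1$, paid entirely by $v_1$, and stops with $W=\{e,d\}$. For $S=V$ we get $\rho^\star=2$ and $\cost(d)/u_S(d)=\nicefrac12$. Your round $t$ is the purchase of $d$, and the only voter crossing her threshold is $v_1$. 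But $v_1$ paid $1$ for $d$ instead of the fair $\nicefrac12$, because $v_2$ was capped. Her left-hand side is therefore $\nicefrac12+\nicefrac14<1=\nicefrac{b}{n}$. The inequality holds only for $v_2$, whose value is $1+\nicefrac12$.

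This example defeats each of your fallbacks:
\begin{itemize}
\item Your potential-style claim (some voter ending below her $c$-threshold never overpaid) is false.
\item Your plan to bound the round-$t$ payment by $\rho^\star u_i(c')$ is too weak; here that bound is $2$, while $\nicefrac12$ is needed.
\item Restricting $S$ to the supporters of $c$ proves only a weaker statement. By \eqref{eq:verification-monotonicity}, $\{v_1\}$ passes, while $S$ needs $v_2$.
\end{itemize}

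The missing idea is to argue by contradiction, as the paper does. Assume that \emph{every} $v_i\in S$, including the non-supporters of $c$, violates \eqref{eq:ejr-up-to-one}, and set $\gamma=\rho^\star$. This uniform hypothesis drives an induction over the rounds of MES. When a candidate $d$ with $u_S(d)>0$ is bought, each $v_i\in S$ still holds at least $u_i(d)\min(\cost(d)/u_S(d),\gamma)+\gamma u_i(c)$. Hence $c$ is affordable at a rate at most $\gamma$. If moreover $\cost(d)/u_S(d)<\gamma$, then $d$ itself is affordable at rate at most $\cost(d)/u_S(d)$. So the rate MES uses is at most $\min(\cost(d)/u_S(d),\gamma)$, and no voter of $S$ ever pays more than $u_i(d)\min(\cost(d)/u_S(d),\gamma)$. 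At termination every $v_i\in S$ still has at least $\gamma u_i(c)$, so $c$ is affordable, which is a contradiction.

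In the example, it is precisely $v_2$ being capped on $d$ that breaks this hypothesis, which is why she is the witness. Your second obstacle, by contrast, is harmless. Measure $v_i$'s budget right after round $t$ rather than at termination; the summands for candidates bought later are nonnegative.
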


\begin{proof}
    Let $W$ be an outcome returned by Equal Shares, and towards a contradiction suppose that $W$ violates strong EJR+ up-to-one. Let $S\subseteq V$ and $c \in C\setminus W$ witness this violation, with $u_S(c)>0$, and let
   \begin{align*}
        \gamma = \frac{\cost(c)}{u_S(c)} > 0.
   \end{align*}
    Thus, for every voter $v_i\in S$, we have
    \begin{align}\label{eq:mes-strong-ejr-violation}
        \sum_{\substack{d\in W\\u_S(d)>0}} u_i(d) \cdot \min\left(\frac{\cost(d)}{u_S(d)},\gamma\right) + \gamma u_i(c) \leq \frac{b}{n}.
    \end{align}

    We now show, by induction on the iterations of MES, that each voter $v_i \in S$ pays for every selected candidate $d$ with $u_S(d) > 0$ at most
    \begin{align}\label{eq:mes-strong-ejr-payment}
        u_i(d) \cdot \min\left(\frac{\cost(d)}{u_S(d)}, \gamma\right).
    \end{align}
    The induction hypothesis is trivial before the first iteration. Consider an iteration in which MES selects a candidate $d$ at parameter $\rho$. Let $W'$ denote the set of candidates selected before this iteration, and let $b_i$ denote the remaining budget of voter $v_i$ immediately before selecting $d$. Without loss of generality we can assume that $u_S(d) > 0$. By the induction hypothesis, for every voter $v_i\in S$ we have that
    \begin{align*}
        b_i \geq \frac{b}{n}  - \sum_{\substack{x\in W'\\u_S(x)>0}} u_i(x) \cdot \min\left( \frac{\cost(x)}{u_S(x)}, \gamma\right) \geq
            u_i(d) \cdot \min\left(\frac{\cost(d)}{u_S(d)},\gamma\right) + \gamma u_i(c).
    \end{align*}
    The second inequality follows from \eqref{eq:mes-strong-ejr-violation}, since $d\in W\setminus W'$ and all the omitted summands are nonnegative. In particular, $b_i\geq\gamma u_i(c)$ for every $v_i\in S$. Hence,
    \begin{align*}
        \sum_{v_i\in V}\min(b_i,\gamma u_i(c)) \geq \sum_{v_i\in S}\gamma u_i(c) = \cost(c).
	\end{align*}
    Since $c \notin W$, candidate $c$ is still unelected. MES therefore selects a candidate with a parameter $\rho$ satisfying $\rho\leq\gamma$. If $\cost(d) / u_S(d) < \gamma$, the remaining-budget bound also gives
    \begin{align*}
        b_i \geq u_i(d) \cdot \frac{\cost(d)}{u_S(d)} \qquad\text{for every }v_i\in S.
	\end{align*}
    Consequently,
    \begin{align*}
        \sum_{v_i \in V} \min\left(b_i, \frac{\cost(d)}{u_S(d)} \cdot u_i(d)\right) \geq \sum_{v_i\in S} \frac{\cost(d)}{u_S(d)} \cdot u_i(d) = \cost(d).
	\end{align*}
    Thus, $d$ itself is affordable at a parameter at most $\cost(d)/u_S(d)$. In either case, we have that
    \begin{align*}
        \rho \leq \min\left(\frac{\cost(d)}{u_S(d)}, \gamma\right).
    \end{align*}
    Each voter $v_i\in S$ therefore pays
    \begin{align*}
        \min(b_i,\rho u_i(d)) \leq u_i(d) \cdot \min\left(\frac{\cost(d)}{u_S(d)},\gamma\right),
	\end{align*}
    which completes the induction.

    Finally, after all candidates in $W$ have been selected, the payment bounds and \eqref{eq:mes-strong-ejr-violation} imply that every voter $v_i\in S$ has remaining budget
    \begin{align*}
        b_i \geq \frac{b}{n} - \sum_{\substack{d\in W\\u_S(d)>0}} u_i(d) \cdot \min\left(\frac{\cost(d)}{u_S(d)}, \gamma\right) \geq \gamma u_i(c) \text{.}
    \end{align*}
    Hence, $c$ is still affordable, which contradicts the stopping condition of MES and completes the proof.
\end{proof}

\begin{example}[Exact strong EJR+ is not always satisfiable]
    Consider a committee election with two voters, two unit-cost candidates $C=\{c_1,c_2\}$, and budget $b=1$. The voters have approval utilities, with
    \begin{align*}
        A_1=\{c_1\},
        \qquad
        A_2=\{c_2\}.
	\end{align*}
    Let $W$ be a feasible outcome. Since $|W|\leq 1$, there exists $i\in\{1, 2\}$ such that $c_i \notin W$. Consider the group $S = \{v_i\}$ and the unelected candidate $c_i$. We have $u_S(c_i)=1$, but $u_S(d)=0$ for every $d\in W$. Consequently,
    \begin{align*}
        \sum_{\substack{d\in W\\u_S(d)>0}} u_i(d) \cdot \min\left(\frac{\cost(d)}{u_S(d)}, \frac{\cost(c_i)}{u_S(c_i)} \right) = 0 < \frac{1}{2} = \frac{b}{n}.
	\end{align*}
    Thus, the axiom is not always satisfiable, even for committee elections with approval utilities. \hfill $\lrcorner$
\end{example}

\subsection{Strong EJR+ versus Weak EJR+}

It is worth noting that even for cost-utilities and for committee elections with approval utilities the definition of strong EJR+ is considerably stronger than the definition of EJR+ as proposed by~\citet{BP-ejrplus}. This is illustrated by the following example.

\begin{example}[Shared winners can make strong EJR+ more demanding]
\label{ex:shared}
There are ten candidates in each of three clusters $A$, $B$, and $C$, all of unit cost, and the budget equals to $20$. There are $1000$ voters, and their approval sets are given in \cref{tab:shared}. Take $W=A \cup C$.

\begin{table}[htbp]
\centering
\caption{The approval profile from \cref{ex:shared}. Each group, $A$, $B$, and $C$, contains ten candidates.}
\label{tab:shared}
\begin{tabular}{lrrr}
\toprule
Voter group & Number of voters & Approval set & Approval utility from $W$\\
\midrule
First group & $500$ & $A\cup B$ & $10$\\
Second group & $499$ & $A$ & $10$\\
Last voter & $1$ & $A\cup B\cup C$ & $20$\\
\bottomrule
\end{tabular}
\end{table}

Observe that $W$ satisfies standard EJR+. Only $B$-candidates are not selected. A coalition supporting one of them either contains the last voter, who has utility $20$, or consists of at most $500$ voters each having utility $10$. These utilities meet the requirement from \Cref{def:standard-ejr-plus}.

The outcome is also in the core. A coalition containing a second-group voter cannot deviate since these voters have already attained their maximum level of satisfaction. A coalition
containing the last voter cannot deviate, since this voter already approves 20 selected candidates from the solution $W$, which is the maximal number of candidates that can be selected with the budget of $20$. Every remaining coalition consists of at most $500$ first-group voters and can afford at most ten candidates, which again precludes a strict improvement.

Nevertheless, let $S$ consist of the first $999$ voters and choose $c\in B$. Then 
\begin{align}
	\sum_{c' \in W\cup \{ c \} \colon u_S(c') > 0} u_i(c') \cdot \min \left( \frac{\cost(c')}{\sum_{v_j \in S} u_j(c')}, \frac{\cost(c)}{\sum_{j \in S} u_j(c)}\right) \leq \frac{10}{999} + \frac1{500} < \frac{20}{1000} \text{.}
\end{align}

Thus $W$ fails strong EJR+ up to one. This example also explains why non-supporters of the challenger candidate $c$ must be allowed in $S$: the second group helps finance the unanimously supported $A$-candidates, and so the first one can claim the right to some additional candidates from $B$. \hfill $\lrcorner$
\end{example}
		
Intuitively, in \Cref{ex:shared}, we see that the unanimously approved $A$-candidates occupy half of the slots in the committee. The first group of half of the voters is still entitled to roughly half of the remaining slots, which is well-aligned with our intuitive interpretation of fairness. Notably, the same example shows that strong EJR+ up-to-one is not implied by the core, and it addresses the limitations of the core mentioned by~\citet{pet-pie-sha-sko:stable-priceability}.
	
\subsection{Verification of the EJR+ Axioms}
\label{sec:verification-ejr-plus}

Although the definitions of weak and strong EJR+ quantify over exponentially many groups of voters, all four variants can be verified in polynomial time. Our algorithm considers each unelected candidate separately and repeatedly removes voters who cannot belong to any group witnessing a violation. The key observation is that shrinking a group can only increase the payment bounds of the voters who remain in that group.

\begin{theorem}\label{thm:verification-ejr-plus}
    Given an election and a feasible outcome $W\subseteq C$, it can be verified in $O\bigl(mn^2(|W|+1)\bigr)$ time whether $W$ satisfies weak EJR+, weak EJR+ up-to-one, strong EJR+, or strong EJR+ up-to-one.
\end{theorem}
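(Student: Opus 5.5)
For each unelected candidate $c \notin W$ separately, we will run a voter-elimination procedure that either finds a violating group or certifies that none exists for $c$. Fix $c$ and a group $S$ with $u_S(c)>0$. For each voter $v_i \in S$ let $f_i(S)$ denote the left-hand side of the relevant inequality: $\gamma_S u_i(W)$ for weak EJR+ (or $\gamma_S(u_i(c)+u_i(W))$ for the up-to-one variant), where $\gamma_S = \cost(c)/u_S(c)$; and for the strong variants the sum in \eqref{eq:ejr} or \eqref{eq:ejr-up-to-one}. A violation for $(c,S)$ means $f_i(S) < \nicefrac{b}{n}$ for every $v_i \in S$ (resp.\ $f_i(S) \le \nicefrac{b}{n}$ for the up-to-one variants).

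The first step is a monotonicity lemma: if $S' \subseteq S$ and $v_i \in S'$, then $f_i(S') \geq f_i(S)$. This holds because $u_{S'}(d) \le u_S(d)$ for every $d$, so both $\cost(d)/u_{S'}(d)$ and $\gamma_{S'}$ are at least their counterparts for $S$, and $\min$ is monotone. The only subtlety is the restriction $u_S(c')>0$ in the strong sums. A summand that disappears when passing to $S'$ has $u_{S'}(c')=0$, hence $u_i(c')=0$ for $v_i\in S'$, so it contributed zero anyway. We also need $u_{S'}(c)>0$ to keep $\gamma_{S'}$ finite; groups with $u_{S'}(c)=0$ are simply not quantified over. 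The consequence is that a voter with $f_i(S) \ge \nicefrac{b}{n}$ (resp.\ $>$) cannot belong to any violating subgroup of $S$.

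The algorithm for fixed $c$ starts from $S=V$ (if $u_V(c)=0$, there is nothing to check). It repeatedly computes $f_i(S)$ for all $v_i\in S$, removes every voter that satisfies the inequality, and stops when either no voter satisfies it or $u_S(c)=0$. In the first case $S$ itself is a violating group. In the second case we argue by induction, using the monotonicity lemma, that every violating group for $c$ is contained in the current $S$ at every stage. Hence any violating group would have $u(c)=0$, which is impossible, so no violation exists for $c$. Removal preserves the invariant precisely by the lemma: a removed voter satisfies the inequality for a superset, so it satisfies it for every subgroup containing it.

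For the running time, there are at most $n$ rounds per candidate, since each non-final round removes at least one voter. In each round we first compute $u_S(d)$ for the $|W|+1$ relevant candidates in $O(n(|W|+1))$ time, and then each $f_i(S)$ in $O(|W|+1)$ time, for $O(n(|W|+1))$ per round. Over $m$ candidates this gives $O(mn^2(|W|+1))$. The main obstacle is getting the monotonicity lemma exactly right for the strong variants, including the treatment of the indexing condition $u_S(c')>0$ and of zero-utility voters. I would handle it by the termwise argument above.
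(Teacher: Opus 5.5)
Your proposal is correct and follows essentially the same route as the paper: the monotonicity inequality under shrinking groups, voter elimination starting from $S=V$ with the invariant that every violating group stays inside the current $S$, and the same $O\bigl(mn^2(|W|+1)\bigr)$ count. The differences are minor: you remove all qualifying voters in each round, which your invariant argument still covers and which the paper remarks is faster in practice, and you spell out why summands dropped by the condition $u_{S'}(d)>0$ contribute zero, a step the paper leaves implicit.
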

\begin{proof}
    We describe the algorithm for strong EJR+ up-to-one, and it can be adapted in a straightforward way to the other three variants.

	Let us start with the following observation. Consider an unelected candidate $c\in C\setminus W$, two groups of voters, $S'\subseteq S\subseteq V$, with $u_{S'}(c) > 0$, and a voter $v_i \in S'$. Observe that:
    \begin{align}
        \sum_{\substack{d\in W\cup\{c\} \\ u_{S'}(d) > 0}} u_i(d) \cdot \min\left(\frac{\cost(d)}{u_{S'}(d)}, \frac{\cost(c)}{u_{S'}(c)} \right) \geq
        \sum_{\substack{d\in W\cup\{c\} \\ u_S(d) > 0}} u_i(d) \cdot \min\left(\frac{\cost(d)}{u_S(d)}, \frac{\cost(c)}{u_S(c)} \right) \text{.}
        \label{eq:verification-monotonicity}
    \end{align}
	
	Now, the algorithm works as follows. For each candidate $c\in C\setminus W$, independently, we initialize $S = V$. While $u_S(c) > 0$, we look for a voter $v_i\in S$
    satisfying
    \begin{align}
        \sum_{\substack{d\in W\cup\{c\} \\ u_S(d) > 0}} u_i(d) \cdot \min\left(\frac{\cost(d)}{u_S(d)}, \frac{\cost(c)}{u_S(c)}\right) > \frac{b}{n}.
        \label{eq:verification-removal}
    \end{align}
    If such a voter exists, remove her from $S$ and repeat. The expressions are recomputed for the new group, and we continue in the loop.

    If no voter satisfies \eqref{eq:verification-removal} and $u_S(c) > 0$, we claim $(S,c)$ as a witness that $W$ violates strong EJR+ up-to-one. If instead $u_S(c) = 0$, we proceed to the next unelected candidate. If all unelected candidates have been considered without finding a violation, then we declare that $W$ satisfies strong EJR+ up-to-one.

    Let us now prove the correctness of the algorithm. Fix a candidate $c\in C\setminus W$. We show that throughout the procedure, every group witnessing a violation involving $c$ is contained in the current set $S$. This holds initially because $S = V$. Suppose that the invariant holds and the algorithm removes a voter $v_i$ satisfying \eqref{eq:verification-removal}. Let $X\subseteq S$ be any group witnessing a violation involving $c$; by definition $u_X(c) > 0$. If $v_i\in X$, \eqref{eq:verification-monotonicity}, applied with $S'=X$, implies
   \begin{align*}
        \sum_{\substack{d\in W\cup\{c\} \\ u_X(d) > 0}} u_i(d) \cdot \min\left(\frac{\cost(d)}{u_X(d)}, \frac{\cost(c)}{u_X(c)}\right) > \frac{b}{n}.
   \end{align*}
    This contradicts the assumption that $X$ witnesses a violation. Thus, $v_i\notin T$, and the invariant is preserved.

    If the procedure ends with $u_S(c)=0$, no subset of $S$ has positive utility for $c$. By the invariant, no group witnessing a violation involving $c$ exists. Otherwise, the procedure ends with $u_S(c) > 0$ and no voter satisfying \eqref{eq:verification-removal}. Hence, every voter in $S$ has the expression in that inequality at most $\nicefrac{b}{n}$ so $(S, c)$ indeed witnesses a violation. This proves the correctness.

    Finally, let us assess the running time of the algorithm. For a fixed candidate $c$, at most $n$ voters are removed. In each iteration, all values $u_S(d)$ for $d\in W\cup\{c\}$ can be computed in $O\bigl(n(|W|+1)\bigr)$ time. Evaluating the removal condition for all remaining voters takes the same asymptotic number of operations. Thus, processing one candidate can be done in $O\bigl(n^2(|W|+1)\bigr)$ time. There are at most $m$ unelected candidates, giving the claimed bound.
\end{proof}

Note that the algorithm in \Cref{thm:verification-ejr-plus} will run considerably faster in practice if in each step we remove all voters satisfying \eqref{eq:verification-removal} from the group $S$.

\section{Stable Priceability and Core+}
\label{sec:stable-priceability-core-plus}

The EJR+ axioms considered in the previous section compare hypothetical payments for individual groups of voters; these hypothetical payments might be different for different candidates $c \in C \setminus W$ that we consider. We now consider a complementary approach, in which a single payment system provides a certificate for the whole electorate. We first describe an extension of stable priceability~\cite{pet-pie-sha-sko:stable-priceability} to additive utilities, and then we use a similar approach to extend the recently defined axiom of Core+~\cite{bec-gre-pet:core}.

Let us first recall the definition of the core. 

\begin{definition}
A feasible outcome $W$ is in the \myemph{core} if for every nonempty group of voters $S\subseteq V$ and every set of candidates $T\subseteq C$ with $\cost(T)\leq b\cdot \nicefrac{|S|}{n}$, there exists a voter $v_i\in S$ such that $u_i(W) \geq u_i(T)$.
We say that $W$ satisfies the \myemph{core up-to-one} if for every such $S$ and $T$, there exists $v_i\in S$ such that either $u_i(W)\geq u_i(T)$, or $u_i(W) + u_i(c) > u_i(T)$ for some candidate $c\in T\setminus W$.
\end{definition}

For approval committee elections, the definitions of core and core up-to-one are equivalent: any strictly improving deviation gives each member at least one additional approved candidate.

\subsection{Stable Priceability for Additive Utilities}

Stable priceability was introduced by \citet{pet-pie-sha-sko:stable-priceability} for approval committee elections. Below we extend it to additive utilities, once again following the intuition that the purchases with lower price-per-utility ratios are considered generally better.

\begin{definition}[Stable Priceability for Additive Utilities]\label{def:rate-stable-priceability}
    A feasible outcome $W$ is priceable if there exist a common virtual endowment $\beta\geq b/n$, nonnegative payments $p_i(d)$ for all $v_i\in V$ and $d\in W$, and a set of nonnegative remaining budgets of the voters $r_i$, $v_i \in V$, satisfying the following conditions
    \begin{align*}
        & r_i + \sum_{d\in W}p_i(d) = \beta &&\text{for every } v_i\in V,\\
        & p_i(d) = 0 &&\text{whenever } u_i(d) = 0,\\
        & \sum_{v_i\in V}p_i(d) =\cost(d) &&\text{for every } d\in W,
    \end{align*}
    and such that, for every $c\in C\setminus W$,
    \begin{align}\label{eq:rate-stability}
        \sum_{v_i\in V_c} \max\left(r_i, u_i(c)\cdot \max_{d\in W\cap A_i}\frac{p_i(d)}{u_i(d)}\right) \leq \cost(c) \text{.}
    \end{align}
    Here and below, a maximum over an empty set is interpreted as zero. \hfill $\lrcorner$
\end{definition}

The second expression inside the maximum in \eqref{eq:rate-stability} is the price of $u_i(c)$ units of utility, assuming it is paid at the largest rate that voter $v_i$ pays for a selected candidate. Note that the value of the maximum can be in principle higher than the voter's total budget $\beta$. Thus, it would be tempting to rewrite \eqref{eq:rate-stability} as:
\begin{align*}
    \sum_{v_i\in V_c} \max\left(r_i, \min\left(\beta, u_i(c)\cdot \max_{d\in W\cap A_i}\frac{p_i(d)}{u_i(d)}\right)\right) \leq \cost(c) \text{.}
\end{align*}
While such formulation would make the axiom a bit weaker, all the logical implications that we further show would still hold. However such formulation would result in losing polynomial-time verification, which is the reason why we use \eqref{eq:rate-stability} in \Cref{def:rate-stable-priceability}.

For approval utilities and unit costs, condition \eqref{eq:rate-stability} reduces to
\begin{align*}
    \sum_{v_i\in V_c} \max\left(r_i, \max_{d\in W\cap A_i}p_i(d)\right) \leq 1 \text{,}
\end{align*}
which is exactly the condition in the original definition of stable priceability~\cite{pet-pie-sha-sko:stable-priceability}.

Note that $\beta$ is a virtual endowment of the voters, whereas $b$ is the total available budget determining feasibility. In particular, $\beta$ may exceed $\nicefrac{b}{n}$ while the selected outcome does not exceed the available budget.

\subsection{Core+ for Additive Utilities}

Let us start by recalling the definition of loose payment systems and the definition of approval Core+, recently given by \citet[Theorem~3.3]{bec-gre-pet:core}. In contrast to the classic notion of priceability, the payment certificate in the definition of Core+ does not require the selected candidates to be paid for in full. We retain this feature in the additive setting.

\begin{definition}[Loose Payment System]
\label{def:core-plus-loose-payments}
    A \myemph{loose payment system} for a feasible outcome $W$ consists of nonnegative payments $p_i(d)$ for $v_i\in V$ and $d\in W$, and nonnegative remaining amounts $r_i$, satisfying the following conditions:
    \begin{align*}
        & r_i + \sum_{d\in W}p_i(d) = \frac{b}{n} && \text{for every } v_i \in V, \\
        & p_i(d) = 0 &&\text{whenever } u_i(d) = 0, \\
        &\sum_{v_i \in V}p_i(d) \leq \cost(d) &&\text{for every } d \in W.
    \end{align*}
\hfill $\lrcorner$
\end{definition}

Consider a loose payment system, a candidate $c \notin W$, and a voter $v_i\in V_c$. We define $\rho_i(c)$ as the solution to the following equality: 
\begin{align}\label{eq:core-plus-rebalancing}
    \sum_{d\in W}\min(p_i(d),\rho_i(c)u_i(d)) +\rho_i(c)u_i(c) = \frac{b}{n}.
\end{align}
This equation uniquely defines $\rho_i(c) > 0$. Indeed, its left-hand side, viewed as a function of $\rho_i(c)$, is continuous and strictly increasing, equals zero at zero, and tends to infinity.

The intuitive interpretation of this value is the following. Imagine that the voter reduces the payments for selected candidates whenever their payment-per-utility ratios exceed $\rho_i(c)$. She then can use her remaining and the released money to pay for $c$ with the same payment-per-utility $\rho_i(c)$. Indeed, it can be equivalently written as:
\begin{align*}
    \rho_i(c)u_i(c) = r_i + \sum_{d\in W}\max(p_i(d)-\rho_i(c)u_i(d),0) \text{.}
\end{align*}
Thus, for each loose payment system we can uniquely define the values $\rho_i(c)$ for $v_i \in V$ and $c \notin W$. This leads us to the following definition of Core+.

\begin{definition}[Core+ up-to-one for Additive Utilities]\label{def:additive-core-plus}
    A feasible outcome $W$ satisfies \myemph{Core+} if it admits a loose payment system such that, for every candidate $c\in C\setminus W$ it holds that:
    \begin{align}\label{eq:core-plus-losing-candidate}
        \sum_{v_i \in V_c} \rho_i(c)u_i(c) < \cost(c),
    \end{align}
    where $\rho_i(c)$ is defined as a solution to \eqref{eq:core-plus-rebalancing}.
\end{definition}

Note that in \Cref{def:additive-core-plus} the payments for selected candidates are the same for all challengers $c \notin W$.

\begin{remark}
    For approval committee elections and outcomes of size exactly $b$, \Cref{def:additive-core-plus} is equivalent to Core+ as defined by \citet{bec-gre-pet:core}.
\end{remark}

\begin{theorem}\label{thm:core-plus-implies-core-up-to-one}
    Core+ up-to-one implies the core up-to-one for additive utilities.
\end{theorem}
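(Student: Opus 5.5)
The plan is a proof by contradiction that sums the Core+ condition over the challengers in $T\setminus W$. Assume $W$ satisfies Core+ (\Cref{def:additive-core-plus}) with loose payment system $(p_i(d))$, $(r_i)$, and suppose $W$ violates the core up-to-one. Then there are a nonempty $S\subseteq V$ and $T\subseteq C$ with $\cost(T)\le b\cdot\nicefrac{|S|}{n}$ such that every $v_i\in S$ satisfies
\begin{align*}
u_i(W)<u_i(T) \qquad\text{and}\qquad u_i(W)+u_i(c)\le u_i(T)\ \text{ for all } c\in T\setminus W.
\end{align*}
The target is the following per-voter inequality. For every $v_i\in S$,
\begin{align}\label{eq:plan-core-per-voter}
\frac{b}{n}\le \sum_{d\in T\cap W}p_i(d)+\sum_{c\in T\setminus W,\ u_i(c)>0}\rho_i(c)u_i(c).
\end{align}
Given \eqref{eq:plan-core-per-voter}, we sum over $v_i\in S$ and swap the order of summation. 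The first part is at most $\sum_{d\in T\cap W}\cost(d)$, since $\sum_{v_i\in V}p_i(d)\le\cost(d)$ in a loose payment system. For each $c\in T\setminus W$, the second part is at most $\sum_{v_i\in V_c}\rho_i(c)u_i(c)$, which is $<\cost(c)$ by \eqref{eq:core-plus-losing-candidate}. We must check that $T\setminus W\neq\emptyset$, so that this strict inequality is actually used. This holds because $T\subseteq W$ would give $u_i(W)\ge u_i(T)$. We then obtain $|S|\cdot\nicefrac{b}{n}<\cost(T)$, contradicting $\cost(T)\le b\cdot\nicefrac{|S|}{n}$.

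To prove \eqref{eq:plan-core-per-voter} for a fixed $v_i\in S$, we first note that $u_i(T\setminus W)>0$. Otherwise $u_i(T)=u_i(T\cap W)\le u_i(W)$, which is a contradiction. Let $c^\star$ be the candidate of $T\setminus W$ with $u_i(c^\star)>0$ that minimizes $\rho_i(c)$, and put $\rho=\rho_i(c^\star)$. By the defining equation \eqref{eq:core-plus-rebalancing},
\begin{align*}
\frac{b}{n}=\sum_{d\in W}\min\bigl(p_i(d),\rho u_i(d)\bigr)+\rho u_i(c^\star).
\end{align*}
We split the sum over $W$ into two parts. For $d\in T\cap W$ we bound the minimum by $p_i(d)$, and for $d\in W\setminus T$ we bound it by $\rho u_i(d)$. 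This gives
\begin{align*}
\frac{b}{n}\le\sum_{d\in T\cap W}p_i(d)+\rho\bigl(u_i(W\setminus T)+u_i(c^\star)\bigr).
\end{align*}

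Next we use the violation assumption for $c^\star$. Subtracting $u_i(T\cap W)$ from both sides of $u_i(W)+u_i(c^\star)\le u_i(T)$ gives $u_i(W\setminus T)+u_i(c^\star)\le u_i(T\setminus W)$. Therefore
\begin{align*}
\rho\bigl(u_i(W\setminus T)+u_i(c^\star)\bigr)\le\rho\, u_i(T\setminus W)\le\sum_{c\in T\setminus W,\ u_i(c)>0}\rho_i(c)u_i(c),
\end{align*}
where the last step uses the minimality of $\rho$. This yields \eqref{eq:plan-core-per-voter}.

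I expect the main obstacle to be choosing the right per-voter quantity to sum. The key choices are to charge the selected members of $T$ through the actual payments $p_i(d)$ and to use the smallest rate $\rho_i(c^\star)$ for the remaining utility. Once that is fixed, the steps above are routine.
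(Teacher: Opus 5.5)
Your proof is correct and follows the paper's argument: the same per-voter bound (actual payments on $T\cap W$, smallest rate on the remaining utility), then summation over $S$ and the strict Core+ inequality on the nonempty set $T\setminus W$. The only difference is that the paper takes $c_i$ to maximize $u_i$ over $T\setminus W$ and then argues from the rebalancing equation that $c_i$ minimizes $\rho_i$; you pick the $\rho_i$-minimizer directly, which is legitimate because the violation hypothesis holds for every $c\in T\setminus W$, and it saves you that monotonicity step.
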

\begin{proof}
    Let $W$ be an outcome that satisfies Core+, and fix a witnessing loose payment system. Suppose, towards a contradiction, that there is a nonempty group $S\subseteq V$ and a set of candidates $T\subseteq C$ such that $\cost(T)\leq b \cdot \nicefrac{|S|}{n}$, $u_i(T) > u_i(W)$  for every $v_i\in S$, and $u_i(W) + u_i(c) \leq u_i(T)$ for all $c\in T\setminus W$.

    For each voter $v_i\in S$ we define $c_i$ as the candidate from $T\setminus W$ that maximizes $u_i(c_i)$. In particular, $u_i(c_i) > 0$. For every $c \in (T\setminus W) \cap A_i$, we have that $\rho_i(c) \geq \rho_i(c_i)$.  Indeed, if we replace $u_i(c_i)$ by the weakly smaller value $u_i(c)$ in \eqref{eq:core-plus-rebalancing}, then the slope of the left-hand side function can only decrease.

    Now, for every $v_i\in S$, we obtain
    \begin{align*}
        \frac{b}{n} &=\sum_{d\in W} \min(p_i(d),\rho_i(c_i)u_i(d)) +\rho_i(c_i)u_i(c_i) \leq \sum_{d\in T\cap W}p_i(d) + \sum_{d\in W \setminus T}\rho_i(c_i)u_i(d) +\rho_i(c_i)u_i(c_i) \\
        & = \sum_{d\in T\cap W}p_i(d) + \rho_i(c_i)\bigl(u_i(W\setminus T) + u_i(c_i)\bigr) \leq\sum_{d\in T\cap W}p_i(d) + \rho_i(c_i)u_i(T\setminus W)\\
        & \leq\sum_{d\in T\cap W}p_i(d)
            +\sum_{c\in (T\setminus W)\cap A_i}
                \rho_i(c)u_i(c).
    \end{align*}
    Summing over $S$, we get
    \begin{align*}
        |S|\frac{b}{n} &\leq \sum_{d\in T\cap W}\sum_{v_i \in S}p_i(d) +\sum_{c\in T\setminus W} \sum_{v_i \in S\cap V_c}\rho_i(c) u_i(c) < \cost(T\cap W) + \cost(T\setminus W) = \cost(T) \leq |S| \cdot \frac{b}{n} \text{.}
    \end{align*}
    The strict inequality follows from \eqref{eq:core-plus-losing-candidate} and the fact that $T\setminus W\neq\emptyset$. This gives a contradiction, and completes the proof.
\end{proof}

Below we show that stable-priceability is a stronger notion than Core+ up-to-one. 

\begin{theorem}\label{thm:rate-stability-implies-core-plus}
    Suppose that $W$ admits a stable price system with common virtual endowment $\beta \geq \nicefrac{b}{n}$. If either $\beta > \nicefrac{b}{n}$ or $W$ is exhaustive, then $W$ satisfies Core+ up-to-one.
\end{theorem}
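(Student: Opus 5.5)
The plan is to turn the stable price system into a loose payment system by rescaling it to the endowment $\nicefrac{b}{n}$. We then bound each voter's rebalanced contribution $\rho_i(c)u_i(c)$ by the corresponding term of \eqref{eq:rate-stability}. Let $\lambda := \frac{b}{n\beta}\le 1$, and define $p'_i(d) := \lambda p_i(d)$ and $r'_i := \lambda r_i$. These values are nonnegative and satisfy $r'_i + \sum_{d\in W}p'_i(d) = \lambda\beta = \nicefrac{b}{n}$. They vanish whenever $u_i(d)=0$. They also satisfy $\sum_i p'_i(d) = \lambda\cost(d)\le\cost(d)$. So this is a loose payment system as in \Cref{def:core-plus-loose-payments}. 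Let $\rho_i(c)$ be defined from it via \eqref{eq:core-plus-rebalancing}, and write $q_i := \max_{d\in W\cap A_i} \nicefrac{p_i(d)}{u_i(d)}$.

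\emph{Key pointwise bound.} Fix $c\notin W$ and $v_i\in V_c$. We use the equivalent form of the rebalancing equation,
\begin{align*}
\rho_i(c)u_i(c) = r'_i + \sum_{d\in W}\max\bigl(p'_i(d)-\rho_i(c)u_i(d),0\bigr).
\end{align*}
There are two cases.
\begin{itemize}
\item If $\rho_i(c) \ge \lambda q_i$, every released term is zero, because $p'_i(d)\le \lambda q_i u_i(d)$. Hence $\rho_i(c)u_i(c) = \lambda r_i$.
\item If $\rho_i(c) < \lambda q_i$, then $\rho_i(c)u_i(c) < \lambda q_i u_i(c)$, strictly, since $u_i(c)>0$.
\end{itemize}
In both cases $\rho_i(c)u_i(c)\le \lambda\max(r_i, u_i(c)q_i)$. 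Moreover, equality can only hold when $\rho_i(c)u_i(c)=\lambda r_i$. Summing over $V_c$ and applying \eqref{eq:rate-stability} gives
\begin{align*}
\sum_{v_i\in V_c}\rho_i(c)u_i(c)\le\lambda\cost(c).
\end{align*}

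\emph{Strictness.} If $\beta>\nicefrac{b}{n}$, then $\lambda<1$ and $\cost(c)>0$, so \eqref{eq:core-plus-losing-candidate} holds strictly.

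If instead $\beta=\nicefrac{b}{n}$, then $\lambda=1$ and we need exhaustiveness. Summing the budget identities gives $\sum_i r_i = n\beta - \cost(W) = b-\cost(W)$. Suppose the sum equalled $\cost(c)$. Then by the case analysis every term would satisfy $\rho_i(c)u_i(c)=r_i$, so the sum would be at most $\sum_i r_i = b-\cost(W)$. Exhaustiveness gives $b-\cost(W)<\cost(c)$, a contradiction.

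This last step is the main obstacle: the inequality inherited from stable priceability is only weak, and strictness must come from the fact that any voter who actually "releases" money pays strictly less than her stability term. Once that is shown, $W$ satisfies \Cref{def:additive-core-plus}, i.e., Core+ up-to-one.
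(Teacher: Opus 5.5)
Your proof is correct and follows the paper's argument. It uses the same rescaling by $\lambda=\nicefrac{b}{(n\beta)}$ to a loose payment system and the same pointwise bound $\rho_i(c)u_i(c)\le\lambda\max\bigl(r_i,u_i(c)q_i\bigr)$, summed against \eqref{eq:rate-stability}. Strictness also comes the same way: either $\lambda<1$, or $\rho_i(c)u_i(c)=r_i$ is forced for all $v_i\in V_c$, contradicting exhaustiveness. The only cosmetic difference is that you derive the pointwise bound (and its equality case) by a case split on the released-money form of \eqref{eq:core-plus-rebalancing}, whereas the paper evaluates the strictly increasing left-hand side at $\rho=\lambda M_i(c)/u_i(c)$.
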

\begin{proof}
    Fix a stable price system. For $c \notin W$ and $v_i\in V_c$ we set
    \begin{align*}
        M_i(c) = \max\left(r_i, u_i(c) \cdot \max_{d \in W\cap A_i} \frac{p_i(d)}{u_i(d)} \right) \text{.}
    \end{align*}
    Hence, by the condition of stable prcieability it holds that $\sum_{v_i\in V_c} M_i(c) \leq \cost(c)$. Let $\lambda = \nicefrac{b}{(n\beta)}\leq 1$, and scale the payments and remaining amounts by setting $p'_i(d)=\lambda p_i(d)$ and
    $r'_i=\lambda r_i$. They form a loose payment system with endowment $\nicefrac{b}{n}$.

    Let $\rho'_i(c)$ be computed through \eqref{eq:core-plus-rebalancing} for this new system. By the definition of $M_i(c)$ we have
    \begin{align*}
        \sum_{d\in W} \min\left(p'_i(d), \frac{\lambda M_i(c)}{u_i(c)}u_i(d)\right) + \lambda M_i(c) =\lambda\sum_{d\in W}p_i(d) + \lambda M_i(c) \geq \frac{b}{n} \text{.}
    \end{align*}
    From \eqref{eq:core-plus-rebalancing} we can therefore infer that $\rho'_i(c)u_i(c)\leq\lambda M_i(c)$ (this follows from the fact that the left-hand side of \eqref{eq:core-plus-rebalancing} is strictly increasing as the function of $\rho$). If $\beta>b/n$, then $\lambda<1$, and
    \begin{align*}
        \sum_{v_i \in V_c}\rho'_i(c)u_i(c) \leq \lambda\sum_{v_i \in V_c} M_i(c) \leq \lambda \cost(c) < \cost(c) \text{,}
    \end{align*}
    which is the condition required by Core+ up-to-one.

    It remains to consider the case when $\beta = \nicefrac{b}{n}$ and when $W$ is exhaustive. Here $\lambda = 1$. If $M_i(c) > r_i$, then the previous estimation also gives a strict inequality. Consequently, equality $\sum_{v_i\in V_c}\rho_i(c)u_i(c)=\cost(c)$ could hold only if $\rho_i(c)u_i(c) = M_i(c) = r_i$ for every $v_i\in V_c$.
    But this would give
    \begin{align*}
        \cost(c) = \sum_{v_i\in V_c}r_i \leq\sum_{v_i\in V}r_i = b - \cost(W) \text{,}
    \end{align*}
    contradicting exhaustiveness. Thus, the required strict inequality holds for every unelected candidate, which completes the proof.
\end{proof}

\subsection{Verification of Stable Priceability and Core+}

For a fixed outcome, the definition of stable priceability can be expressed as a linear program. Indeed, we can introduce variables $M_i(c)$ for $c\notin W$ and $v_i\in V_c$, and the following constraints:
\begin{align*}
    M_i(c)\geq r_i, \qquad  M_i(c)\geq\frac{u_i(c)}{u_i(d)} \cdot p_i(d) \quad\text{for every } d\in W\cap A_i,
\end{align*}
Then, it is sufficient to replace \eqref{eq:rate-stability} with $\sum_{v_i\in V_c}M_i(c)\leq\cost(c)$. All these constraints are linear, and hence the axiom is verifiable in polynomial time~\cite{ellipsoidMethod}. This also illustrates that for approval-based committee elections our formulation of stable-priceability is equivalent to the original formulation by \citet{pet-pie-sha-sko:stable-priceability}.

Verifying Core+ up-to-one requires slightly more care, since \eqref{eq:core-plus-losing-candidate} is a strict inequality. To handle this, we introduce a common slack variable $\epsilon\in[0,1]$ and replace this condition by
\begin{align*}
    \sum_{v_i \in V_c} \rho_i(c)u_i(c) \leq (1-\epsilon) \cdot \cost(c) \qquad\text{for every } c\in C\setminus W.
\end{align*}
Additionally, the $\min$ operator in \eqref{eq:core-plus-rebalancing} can be handled by introducing additional variables and constraints. We introduce nonnegative variables $z_i(d, c)$ for $d\in W$, and impose that:
\begin{align*}
    &z_i(d, c) \leq p_i(d), \\
    &z_i(d, c) \leq \rho_i(c)u_i(d), \\
    &\sum_{d\in W}z_i(d,c) + \rho_i(c)u_i(c) \geq \frac{b}{n}.
\end{align*}

All other constraints remain unchanged. Since all costs are strictly positive, a payment system satisfies the original strict inequalities if and only if it satisfies these non-strict inequalities for some common $\epsilon > 0$. We therefore maximize $\epsilon$ subject to the new constraints. The outcome satisfies Core+ up-to-one exactly when the resulting linear program is feasible and has a strictly positive optimum.

\begin{theorem}\label{thm:verify-additive-core-plus}
    It can be verified in polynomial time whether a feasible outcome is stable priceable.  It can be verified in polynomial time whether a feasible outcome satisfies Core+ up-to-one.
\end{theorem}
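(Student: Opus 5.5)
The plan is to turn both properties into linear programs with polynomially many variables and constraints, using the encodings sketched just before the statement, and then prove that these encodings are exact. Polynomial-time solvability of linear programs~\cite{ellipsoidMethod} then gives both claims. There are two programs. The first has variables $\beta, p_i(d), r_i, M_i(c)$ for stable priceability. The second has variables $p_i(d), r_i, \rho_i(c), z_i(d,c), \epsilon$ for Core+ up-to-one. Each uses $O(nm^2)$ variables and constraints.

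For stable priceability the argument is direct. The constraints $r_i + \sum_{d} p_i(d) = \beta$, $\beta \ge b/n$, $p_i(d)=0$ when $u_i(d)=0$, and $\sum_i p_i(d) = \cost(d)$ are already linear. The nonlinear condition \eqref{eq:rate-stability} is handled by the variables $M_i(c)$, which are lower-bounded by $r_i$ and by $\frac{u_i(c)}{u_i(d)}p_i(d)$ for each $d\in W\cap A_i$. These coefficients are constants because $W$ and the utilities are fixed. We add the constraint $\sum_{v_i\in V_c} M_i(c)\le \cost(c)$.

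Exactness needs both directions. From a stable price system we set $M_i(c)$ equal to the maximum in \eqref{eq:rate-stability}, which gives an LP solution. Conversely, any LP solution satisfies $M_i(c)\ge$ that maximum, so \eqref{eq:rate-stability} holds. Hence the LP is feasible iff $W$ is stable priceable.

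For Core+ up-to-one the main obstacle is showing that the relaxed description of $\rho_i(c)$ is faithful. In the LP we only impose $\sum_d z_i(d,c) + \rho_i(c)u_i(c)\ge b/n$ with $z_i(d,c)\le \min(p_i(d),\rho_i(c)u_i(d))$. This means $\rho_i(c)$ is only an upper bound on the true solution of \eqref{eq:core-plus-rebalancing}. Write $f_i(\rho)$ for the left-hand side of \eqref{eq:core-plus-rebalancing}; recall it is strictly increasing.

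\begin{itemize}
\item Any feasible LP point satisfies $f_i(\rho_i(c)) \ge b/n$. By monotonicity, the true value $\rho_i^*(c)$ is at most the LP value $\rho_i(c)$. So $\sum_{v_i\in V_c}\rho_i^*(c)u_i(c)\le(1-\epsilon)\cost(c)<\cost(c)$ whenever $\epsilon>0$, and the loose payment system certifies Core+.
\item Conversely, given a certifying loose payment system, set $\rho_i(c)=\rho_i^*(c)$ and $z_i(d,c)=\min(p_i(d),\rho_i^*(c)u_i(d))$. Take $\epsilon=\min_{c\notin W}\bigl(1-\sum_{v_i \in V_c}\rho_i^*(c)u_i(c)/\cost(c)\bigr)$, capped to $[0,1]$. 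Since there are finitely many strict inequalities and all costs are positive, $\epsilon>0$.
\end{itemize}

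Thus $W$ satisfies Core+ up-to-one iff the LP maximizing $\epsilon$ is feasible with positive optimum. The LP is bounded because $\epsilon\le 1$. Checking this takes polynomial time. One detail: $\rho_i(c)$ is defined only for $v_i\in V_c$, so the corresponding variables and constraints range only over those voters.
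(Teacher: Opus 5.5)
Your proposal is correct and follows the paper's argument. It uses the same two linear programs: auxiliary variables $M_i(c)$ replace \eqref{eq:rate-stability} for stable priceability, and for Core+ up-to-one the variables $z_i(d,c)$ linearize the $\min$, with a common slack $\epsilon$ that is maximized. Your monotonicity argument fills in a step the paper leaves implicit: the relaxed constraints only force $\rho_i(c)$ to be an upper bound on the true solution of \eqref{eq:core-plus-rebalancing}, and this is harmless because the left-hand side is strictly increasing.
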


\section{From EJR+ to PAV-Inspired Selection Rules} \label{sec:ejr-pav-connection}

In this section, we describe a connection between our formulation of weak EJR+ up-to-one and the objective of Proportional Approval Voting. This connection can be used to design other proportional selection rules.

\subsection{Harmonic Means and Marginal PAV Scores}

Consider a feasible outcome $W$. Recall that, in the definition of weak EJR+ up-to-one, it suffices to consider nonempty groups $S\subseteq V_c$. The axiom requires that for every $c \in C\setminus W$ and every nonempty $S\subseteq V_c$ the following condition should hold:
\begin{align}\label{eq:pav-weak-maximum}
    \max_{v_i\in S} \frac{\cost(c)}{u_S(c)} \bigl(u_i(W)+u_i(c)\bigr) > \frac{b}{n}
\end{align}

Let us replace the maximum by the harmonic mean of the same expressions. This gives the following condition:
\begin{align}\label{eq:pav-harmonic-condition}
    \frac{|S|}{\displaystyle\sum_{v_i\in S} \frac{u_S(c)}{\cost(c)\bigl(u_i(W)+u_i(c)\bigr)}} > \frac{b}{n} \text{.}
\end{align}
Since the harmonic mean of values is at most the maximum of these values, condition \eqref{eq:pav-harmonic-condition} is a strengthening of \eqref{eq:pav-weak-maximum}.

Consider now committee elections with approval utilities. Since $S\subseteq V_c$, we have $u_S(c)=|S|$, $u_i(c)=1$, and $\cost(c)=1$. Thus, the harmonic mean simplifies to
\begin{align*}
    \frac{|S|}{\displaystyle\sum_{v_i \in S} \frac{|S|}{|A_i\cap W| + 1}} = \left(\sum_{v_i\in S}\frac{1}{|A_i\cap W|+1}\right)^{-1} \text{.}
\end{align*}
The sum inside the parentheses increases when voters are added to $S$. Consequently, it suffices to check $S=V_c$, and \eqref{eq:pav-harmonic-condition} is equivalent to
\begin{align}\label{eq:pav-bounded-improvement}
    \sum_{v_i\in V_c}\frac{1}{|A_i\cap W| + 1}<\frac{n}{b} \qquad \text{for every } c\in C \setminus W.
\end{align}
The left-hand side is exactly the marginal increase of the PAV score when adding $c$ to $W$. Let us denote this increase by
\begin{align}\label{eq:pav-marginal-increase}
    \Delta(c\mid W) &:=\operatorname{sc}_{\mathrm{PAV}}(W\cup\{c\}) - \operatorname{sc}_{\mathrm{PAV}}(W) = \sum_{v_i \in V_c}\frac{1}{|A_i\cap W| + 1}.
\end{align}
Condition~\eqref{eq:pav-bounded-improvement} is known from the analysis of proportionality guarantees of the local-search algorithm for PAV~\cite{AEHLSS18,justifiedRepresentation,BP-ejrplus}. Below, we recsll the argument and make the relation between our condition and PAV explicit.

\begin{proposition}\label{prop:pav-marginal-bound}
    Let $W$ be a size-$b$ committee such that no exchange of a selected candidate for an unelected candidate increases its PAV score. Then, for every $c \in C\setminus W$ it holds that
    \begin{align*}
        \Delta(c\mid W) \leq \frac{n}{b + 1} < \frac{n}{b}.
    \end{align*}
    In particular, every PAV outcome satisfies \eqref{eq:pav-harmonic-condition}.
\end{proposition}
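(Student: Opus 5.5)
The plan is the classical swap-averaging argument for PAV local search. Fix $W$ with $|W| = b$ that is locally optimal with respect to single exchanges, and fix an unelected candidate $c \in C \setminus W$. For each $d \in W$, the swap $W \to (W \setminus \{d\}) \cup \{c\}$ does not increase the PAV score. I will write the change in score as
\begin{align*}
    \operatorname{sc}_{\mathrm{PAV}}\bigl((W\setminus\{d\})\cup\{c\}\bigr) - \operatorname{sc}_{\mathrm{PAV}}(W) \;\geq\; \Delta(c\mid W) - L(d),
\end{align*}
where
\begin{align*}
    L(d) = \sum_{v_i \in V_d,\, v_i \notin V_c} \frac{1}{|A_i\cap W|} + \sum_{v_i \in V_d\cap V_c}\Bigl(\frac{1}{|A_i\cap W|} - \frac{1}{|A_i\cap W|+1}\Bigr)
\end{align*}
is the loss from removing $d$. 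For a voter approving both $c$ and $d$, the net effect is zero: she gains $c$ and loses $d$, so her count is unchanged. Since her gain of $\frac{1}{|A_i\cap W|+1}$ is already counted in $\Delta(c\mid W)$, her loss term must be $\frac{1}{|A_i\cap W|+1}$ so that the two cancel. I will check this bookkeeping carefully. The simplest safe bound, which I will actually use, is
\begin{align*}
    \operatorname{sc}_{\mathrm{PAV}}\bigl((W\setminus\{d\})\cup\{c\}\bigr) - \operatorname{sc}_{\mathrm{PAV}}(W) \;\geq\; \Delta(c\mid W) - \sum_{v_i\in V_d} \frac{1}{|A_i\cap W|}.
\end{align*}
This holds because every voter in $V_d$ loses at most $\frac{1}{|A_i\cap W|}$. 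It is an overestimate of the loss for voters in $V_c\cap V_d$, whose true loss is $\frac{1}{|A_i\cap W|}-\frac{1}{|A_i\cap W|+1}$ beyond the already-counted gain.

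Summing the local-optimality inequalities $0 \ge \Delta(c\mid W) - L(d)$ over all $b$ choices of $d \in W$ gives $b\,\Delta(c\mid W) \le \sum_{d\in W} L(d)$. With the crude bound, $\sum_{d \in W}\sum_{v_i \in V_d} \frac{1}{|A_i \cap W|}$ equals the number of voters with $A_i\cap W\neq\emptyset$, so it is at most $n$. This alone yields only $\Delta \le n/b$, and the main obstacle is improving it to the sharper $n/(b+1)$. For that I will use the exact loss. A voter $v_i \in V_c$ with $k_i = |A_i\cap W|$ contributes, summed over $d\in A_i\cap W$, exactly $k_i\bigl(\frac1{k_i}-\frac1{k_i+1}\bigr) = 1-\frac{1}{k_i+1}$. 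A voter outside $V_c$ contributes at most $1$. Hence
\begin{align*}
    \sum_{d\in W} L(d) \;\le\; n - \sum_{v_i\in V_c}\frac{1}{k_i+1} \;=\; n - \Delta(c\mid W).
\end{align*}
Voters in $V_c$ with $k_i=0$ need a separate check: their contribution is $0 \le 1 - 1$, so the bound still holds.

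Combining the two bounds gives $b\,\Delta(c\mid W) \le n - \Delta(c\mid W)$, that is, $\Delta(c\mid W)\le n/(b+1) < n/b$. Finally, a PAV outcome of size $b$ maximizes the score, so it is in particular swap-optimal. The paper has already shown that \eqref{eq:pav-harmonic-condition} is equivalent to \eqref{eq:pav-bounded-improvement}, so the last claim follows. One caveat remains: if fewer than $b$ candidates exist, a PAV outcome could have size below $b$. In that case $C\setminus W=\emptyset$ and the statement is vacuous. Also, because $\Delta > 0$ increments the score, a maximal-score committee always has full size whenever $|C|\ge b$.
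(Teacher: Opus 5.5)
Your strategy matches the paper's, viewed from a different side. The paper sums the removal losses $\operatorname{sc}_{\mathrm{PAV}}(W')-\operatorname{sc}_{\mathrm{PAV}}(W'\setminus\{d\})$ over all $d\in W'=W\cup\{c\}$. It notes three facts: the total equals the number of voters approving something in $W'$, hence is at most $n$; the term for $d=c$ is exactly $\Delta(c\mid W)$; and each of the $b$ terms for $d\in W$ is at least $\Delta(c\mid W)$ by swap-optimality. Your chain $b\,\Delta(c\mid W)\le\sum_{d\in W}L(d)\le n-\Delta(c\mid W)$ is the same computation with the $d=c$ term moved to the right-hand side, and the conclusion is correct.

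However, your bookkeeping contains two slips that happen to cancel, and you should fix them.

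First, the displayed formula for $L(d)$ gives a voter $v_i\in V_c\cap V_d$ the term $\frac{1}{k_i}-\frac{1}{k_i+1}=\frac{1}{k_i(k_i+1)}$. The correct term, as your own prose says, is $\frac{1}{k_i+1}$. Whenever some $v_i\in V_c\cap V_d$ has $k_i\ge 2$, the displayed $L(d)$ is strictly smaller than the true loss, so swap-optimality does not yield $\Delta(c\mid W)\le L(d)$ for it.

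Second, $k_i\bigl(\frac{1}{k_i}-\frac{1}{k_i+1}\bigr)$ equals $\frac{1}{k_i+1}$, not $1-\frac{1}{k_i+1}$.

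With the correct loss $L(d)=\sum_{v_i\in V_d}1/|A_i\cap(W\cup\{c\})|$, which is exactly the paper's removal loss from $W'$, a voter in $V_c$ contributes $k_i\cdot\frac{1}{k_i+1}=1-\frac{1}{k_i+1}$. Your bound then goes through verbatim. The paper's framing in $W'$ avoids the $V_c$ versus non-$V_c$ case split where the slip crept in.

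A minor point on your last paragraph: $\Delta(c\mid W)$ can be $0$, for a candidate nobody approves. So a score-maximizing committee need not have size $b$ even if $|C|\ge b$. In that case every unelected $c$ satisfies $\Delta(c\mid W)=0<n/b$, since otherwise adding $c$ would raise the score, so the final claim still holds.
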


\begin{proof}
    Fix $c \in C\setminus W$ and let $W' = W\cup \{c\}$. The sum of the losses caused by removing individual candidates from $W'$ is
    \begin{align*}
        \sum_{d\in W'}\left(\operatorname{sc}_{\mathrm{PAV}}(W') - \operatorname{sc}_{\mathrm{PAV}}(W' \setminus \{d\})\right) = \sum_{\substack{v_i\in V\\A_i\cap W'\neq\emptyset}} \sum_{d\in A_i\cap W'}\frac{1}{|A_i\cap W'|} = |\{v_i\in V \colon A_i \cap W' \neq \emptyset\}| \leq n \text{.}
    \end{align*}
    Removing $c$ decreases the score by exactly $\Delta(c \mid W)$. For every $d \in W'$, local optimality gives
    \begin{align*}
        \operatorname{sc}_{\mathrm{PAV}}(W') - \operatorname{sc}_{\mathrm{PAV}}(W' \setminus \{d\}) \geq \Delta(c\mid W).
    \end{align*}
    By summing-up over all $d \in W'$ we get that $(b+1)\Delta(c\mid W)\leq n$, which completes the proof.
\end{proof}

\subsection{Minimizing the Largest Marginal Gain}

The preceding observation suggests defining a selection rule directly from the marginal-gain bound, rather than by maximizing the total PAV score. For each size-$b$ committee $W$, let
\begin{align}\label{eq:pav-minimax-objective}
    \Phi(W)=\max_{c\in C\setminus W}\Delta(c\mid W),
\end{align}
where a maximum over an empty set is interpreted as zero. The proposed rule selects all size-$b$ committees minimizing $\Phi(W)$. Intuitively, it minimizes the largest remaining marginal gain offered by an unelected candidate.

\begin{corollary}\label{cor:pav-minimax-ejr}
    Every outcome of the proposed rule that minimizes $\Phi(W)$ satisfies EJR+ and hence EJR.
\end{corollary}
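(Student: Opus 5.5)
The argument has three steps: compare $\Phi$ at a minimizer with its value at a PAV-locally-optimal committee, translate the resulting marginal-gain bound into the harmonic-mean condition \eqref{eq:pav-harmonic-condition}, and then pass down to EJR+ and EJR through results already established.

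\emph{Step 1: bounding the minimum of $\Phi$.} Fix a size-$b$ committee $W^\star$ that maximizes the PAV score, or more generally any size-$b$ committee that is locally optimal under single swaps. Such a committee exists because there are finitely many committees. By \Cref{prop:pav-marginal-bound}, every $c\notin W^\star$ satisfies
\begin{align*}
\Delta(c\mid W^\star)\le \frac{n}{b+1},
\end{align*}
so $\Phi(W^\star)\le \nicefrac{n}{(b+1)}$. Now let $W$ be any outcome of the proposed rule. Since $W$ minimizes $\Phi$, we get $\Phi(W)\le\Phi(W^\star)\le \nicefrac{n}{(b+1)}<\nicefrac{n}{b}$. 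Hence every $c\in C\setminus W$ satisfies \eqref{eq:pav-bounded-improvement}. The only edge case is $C\setminus W=\emptyset$, where the conditions hold vacuously. I am implicitly assuming $|C|\ge b$, so that size-$b$ committees exist; this is the same setting in which the proposition is stated.

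\emph{Step 2: from the marginal-gain bound to weak EJR+ up-to-one.} The earlier discussion in this section shows that, for approval committee elections, \eqref{eq:pav-bounded-improvement} is equivalent to \eqref{eq:pav-harmonic-condition} holding for every nonempty $S\subseteq V_c$. The reason is monotonicity: the sum $\sum_{v_i\in S}\nicefrac{1}{(|A_i\cap W|+1)}$ only grows as voters are added, so checking $S=V_c$ is enough. Since a harmonic mean is at most the maximum of its terms, \eqref{eq:pav-harmonic-condition} implies \eqref{eq:pav-weak-maximum}, which is weak EJR+ up-to-one restricted to $S\subseteq V_c$. The proof of the first theorem of the EJR+ section shows that restricting to $S\subseteq V_c$ loses nothing. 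So $W$ satisfies weak EJR+ up-to-one.

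\emph{Step 3: weak EJR+ up-to-one to EJR+ and EJR.} By that same theorem, weak EJR+ up-to-one coincides with EJR+ (\Cref{def:standard-ejr-plus}) for approval committee elections. Finally, EJR+ implies EJR: a cohesive group approving $\ell$ common candidates with $|S|\ge \ell\nicefrac{n}{b}$ either already has all $\ell$ of those candidates selected, or has some unselected common candidate, and applying EJR+ to that candidate yields a voter in $S$ with at least $\ell$ approved winners. One could alternatively derive EJR from the up-to-one version, using the fact that the up-to-one and exact versions of EJR coincide under approval utilities with integer counts, as already observed in the paper.

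The only delicate point is Step 1, specifically justifying that the minimum of $\Phi$ is at most $\Phi$ at a PAV-locally-optimal committee. This is immediate once such a committee exists. The rest is chaining results already proved.
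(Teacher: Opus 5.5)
Your proposal is correct and follows the paper's proof step for step: a size-$b$ PAV committee together with \Cref{prop:pav-marginal-bound} bounds $\min\Phi$ by $\nicefrac{n}{(b+1)}$, which gives \eqref{eq:pav-bounded-improvement}, then the harmonic-mean strengthening of weak EJR+ up-to-one, then EJR+ by the first theorem of the EJR+ section; your explicit EJR+ $\Rightarrow$ EJR argument simply spells out what the paper leaves as ``hence EJR''. You should drop the aside about deriving EJR from an up-to-one version, since the paper does not support it. Its integrality remark concerns the core, not EJR, and its EJR up-to-one implication is proved only for strong EJR+, which you have not shown here. Your main argument does not depend on it.
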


\begin{proof}
    A size-$b$ PAV outcome witnesses that the minimum of $\Phi$ is at most $\nicefrac{n}{(b+1)}$. Thus, every outcome minimizing $\Phi(W)$ satisfies \eqref{eq:pav-bounded-improvement}, and consequently the harmonic-mean strengthening of weak EJR+ up-to-one. Thus, it satisfies EJR+ for approval committee elections.
\end{proof}

This rule minimising $\Phi(W)$ is inspired by PAV, but does not always return a PAV-optimal committee. Consider six voters, three candidates $C = \{a, b, c\}$, and a committee size of one, with the following approval ballots.
\begin{align*}
    A_1=A_2=\{a\},\qquad
    A_3=A_4=\{b\},\qquad
    A_5=\{a,c\},\qquad
    A_6=\{b,c\}.
\end{align*}
PAV selects $\{a\}$ or $\{b\}$, each with score $3$, whereas $\{c\}$ has score $2$. However,
\begin{align*}
    \Phi(\{a\}) = \Phi(\{b\}) = 3, \qquad \Phi(\{c\}) = \frac{5}{2}.
\end{align*}
Thus, our minimax rule uniquely selects $\{c\}$. 

Note that the optimisation objective of $\Phi(W)$ does not naturally admit a polynomial-time algorithm.

\subsection{Extensions to Additive Utilities}

The same construction suggests an optimization objective for general additive utilities. Here, one can use a weighted harmonic mean, assigning weight $u_i(c)$ to voter $v_i\in S$. This coincides with the unweighted harmonic mean for approval utilities. The resulting mean is
\begin{align*}
    \frac{\cost(c)}{\displaystyle\sum_{v_i\in S} \frac{u_i(c)}{u_i(W)+u_i(c)}}.
\end{align*}
The expression is lowest for $S=V_c$. We again take the reciprocal of the value above, and obtain the following objective:
\begin{align}\label{eq:additive-weighted-harmonic-objective}
    \Phi_{\mathrm{wH}}(W) = \max_{c\in C\setminus W} \frac{1}{\cost(c)} \sum_{v_i\in V_c} \frac{u_i(c)}{u_i(W)+u_i(c)} \text{.}
\end{align}
This objective also reduces to $\Phi(W)$ for approval committee elections. In addition, it is invariant under independently rescaling all utilities of any voter by a positive constant. Any outcome with $\Phi_{\mathrm{wH}}(W)< \nicefrac{n}{b}$ satisfies weak EJR+ up-to-one, since every weighted harmonic mean is at most the corresponding maximum.

There is also a direct connection with Core+ up-to-one. For a loose payment system witnessing that axiom,\eqref{eq:core-plus-rebalancing} gives
\begin{align*}
    \frac{b}{n} = \sum_{d\in W}\min\{p_i(d),\rho_i(c)u_i(d)\} + \rho_i(c)u_i(c) \leq\rho_i(c)\bigl(u_i(W) + u_i(c)\bigr).
\end{align*}
By combining this with \eqref{eq:core-plus-losing-candidate} we get that for every unelected candidate $c$ it holds that:
\begin{align*}
    \frac{b}{n}\sum_{v_i\in V_c} \frac{u_i(c)}{u_i(W) + u_i(c)} \leq \sum_{v_i\in V_c}\rho_i(c)u_i(c) < \cost(c) \text{.}
\end{align*}
Thus, Core+ up-to-one would imply $\Phi_{\mathrm{wH}}(W) < \nicefrac{n}{b}$.

These constructions motivate studying PAV-inspired rules defined through minimisation of certain expressions over non-elected candidates. Analogous objectives can be considered starting with the strong EJR+, but their formulation must specify how to treat nonsupporters of the challenger candidate $c \notin W$ and the exact alternative in the up-to-one axiom. Establishing existence and computational guarantees for such extensions requires a separate analysis.

\section{Conclusion}
\label{sec:conclusion}

We have proposed extensions of the axioms of EJR+ and Core+ to the model of participatory budgeting with additive utilities. Our approach builds on the idea that voters should control equal shares of the budget and that they would compare purchases according to their payment-per-utility ratios. This provides a common interpretation of proportionality across approval utilities, cost utilities, and general additive utilities. Similarly to their original counterparts, our notions formulate representation requirements that do not rely on cohesive groups of voters.

For EJR+, we have introduced weak and strong variants. The weak up-to-one variant recovers the existing definitions for approval committee elections and cost utilities. The strong up-to-one variant implies additive EJR up to one. Moreover, even for approval committee elections, it excludes some outcomes that satisfy both standard EJR+ and the core, and that can be intuitively considered non-proportional. Thus, our payment-based approach not only extends existing notions to a more general model, but also strengthens known proportionality axioms within the classic approval-based setting. All four EJR+ variants considered in the paper can be verified in polynomial time. While the exact variants of the axioms are not always satisfiable, their up-to-one counterparts are satisfied by the outcomes of the method of equal shares.  

Our formulation of Core+ up-to-one is also based on the idea of payments, yet it uses what we call loose payment systems, where the costs of the selected candidates do not need to be paid by the voters in full.  We also established a connection with our extension of stable priceability: every exhaustive stably priceable outcome satisfies Core+ up-to-one. Both axioms can be verified in polynomial time through a linear-programming formulation.

We have also explored a connection between our formulation of EJR+ and Proportional Approval Voting. For approval committee elections, replacing the maximum in weak EJR+ up-to-one by a harmonic mean yields a bound on the marginal PAV score of each unelected candidate. This condition is satisfied by PAV and motivates a different selection rule that minimizes the largest remaining marginal gain. Although this rule need not return
PAV-optimal committees, it still satisfies EJR+. The corresponding extensions to additive utilities are possible, but more delicate. This motivates further research on how payment-based proportionality requirements can
guide the design of PAV-inspired rules for additive utilities.

An important direction for future work is the empirical evaluation of the proposed axioms. Our verification algorithms make it possible to examine how often different selection rules satisfy these requirements on election data, including rules that do not provide worst-case guarantees. In particular, it would be useful to understand how frequently strong EJR+ up-to-one detects violations that are not identified by existing axioms, and how these differences depend on the structure of voters' preferences.

\subsection*{Acknowledgments}

Piotr Skowron is supported by the European Union (ERC, PRO-DEMOCRATIC, 101076570). Views and opinions expressed are however those of the author(s) only and do not necessarily reflect those of the European Union or the European Research Council. Neither the European Union nor the granting authority can be held responsible for them.
\begin{center}
  \includegraphics[width=3cm]{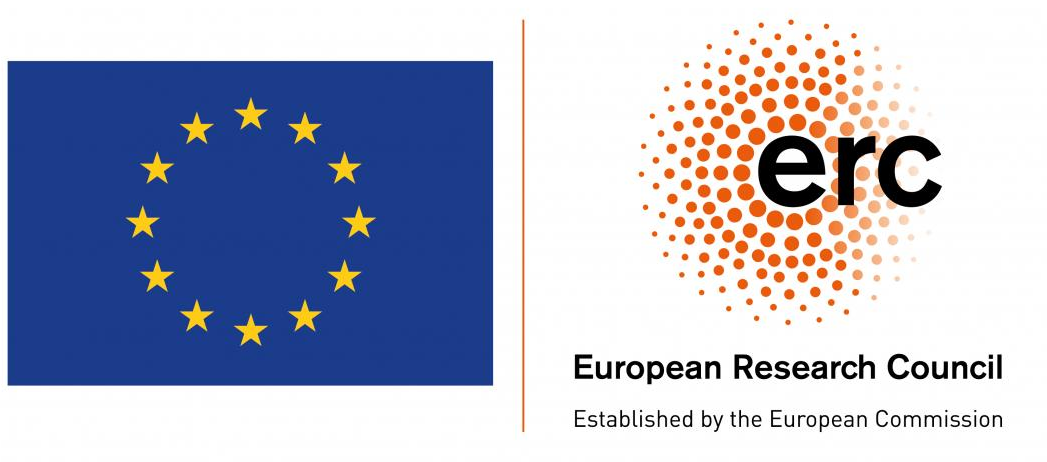}
\end{center}

\bibliographystyle{plainnat}
\bibliography{prop-additive-axioms}
		
\end{document}